\let \chapter \section
\newtheorem{lemma}{Lemma}
\newcommand{\hide}[1]{} 
\newcommand{\argmax}{\operatornamewithlimits{argmax}}
\newcommand{\argmin}{\operatornamewithlimits{argmin}}
\def\oregon{Oregon}
\def\miami{Miami}
\def\boston{Boston}
\def\dallas{Dallas}
\def\chicago{Chicago}
\def\losangeles{Los Angeles}
\def\newyork{New York}
\def\seir{SEIR}
\renewcommand\footnotemark{}
\title{Forecasting the Flu: Designing Social Network Sensors for Epidemics}
\author{
Huijuan Shao\textsuperscript{1,2,*}\thanks{* These authors contribute equally to this work},
K.S.M. Tozammel Hossain\textsuperscript{1,2,*},
Hao Wu\textsuperscript{2,4},
Maleq Khan\textsuperscript{3} \\
Anil Vullikanti\textsuperscript{3},
B. Aditya Prakash\textsuperscript{1,2},
Madhav Marathe\textsuperscript{3},
Naren Ramakrishnan\textsuperscript{1,2}
}
\affil{\textsuperscript{1}Department of Computer Science, Virginia Tech, USA \\
\textsuperscript{2}Discovery Analytics Center, Virginia Tech, USA \\
\textsuperscript{3}Biocomplexity Institute, Virginia Tech, USA \\
\textsuperscript{4}Department of Electrical and Computer Engineering, Virginia Tech, USA}
\date{}
\begin{document}

\maketitle

\begin{abstract}
Early detection and modeling of a contagious epidemic can provide important
guidance about quelling the contagion, controlling its spread, or the effective
design of countermeasures. A topic of recent interest has been to design social
network sensors, i.e., identifying a small set of people who can be monitored to
provide insight into the emergence of an epidemic in a larger population. We
formally pose the problem of designing social network sensors for flu epidemics
and identify two different objectives that could be targeted in such sensor
design problems. Using the graph theoretic notion of dominators we develop an
efficient and effective heuristic for forecasting epidemics at lead time. Using
six city-scale datasets generated by extensive microscopic epidemiological
simulations involving millions of individuals, we illustrate the practical
applicability of our methods and show significant benefits (up to twenty-two
days more lead time) compared to other competitors. Most importantly, we
demonstrate the use of surrogates or proxies for policy makers for designing
social network sensors that require from nonintrusive knowledge of people to
more information on the relationship among people. The results show that the
more intrusive information we obtain, the longer lead time to predict the flu
outbreak up to nine days. 
\end{abstract}

\section{Introduction}
\label{sec:intro}

Given a graph and a contagion spreading on it, can we monitor some nodes to get
\emph{ahead} of the overall epidemic? This problem is of interest in multiple
settings. For example, it is an important problem for public health and
surveillance, as such sensors can provide valuable lead time to authorities to
react and implement containment policies. Similarly in a computer virus
setting, it can provide anti-virus companies lead time to develop solutions. 

Many existing methods for such detection problems typically give indicators
which lag behind the epidemic. Recent work~\cite{christakis:10:sensor} has made
some advances, using the so-called `Friend-of-Friend' approach to select such
sensors. After implementing it among the students at Harvard, Christakis and
Fowler found that the peak of the daily incidence curve (the number of new
infections per day) in the sensor set occurs 3.2 days earlier than that of a
same-sized random set of students. Intuitively, this implies that if
public-health officials monitor the sensor set, they can a get a significant
lead time before the outbreaks happen in the \emph{population-at-large}.
Unfortunately, the heuristic proposed in~\cite{christakis:10:sensor} has a few
shortcomings as we will show next. In fact, this heuristic can give \emph{no}
lead time.

\begin{figure*}[!t]
	\centering\vspace{-0.15in}
	\includegraphics[width=2.2in]{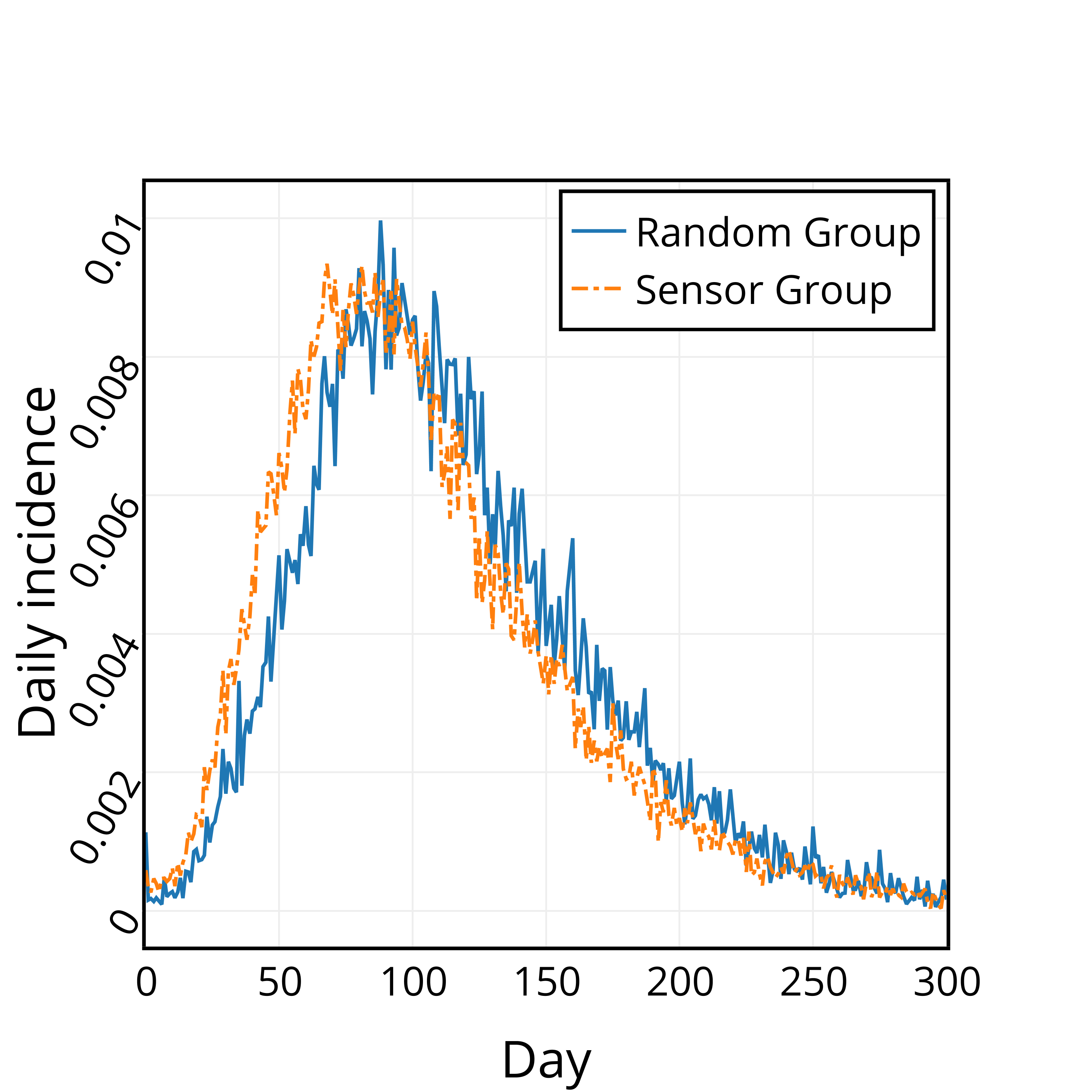}
	\includegraphics[width=2.2in]{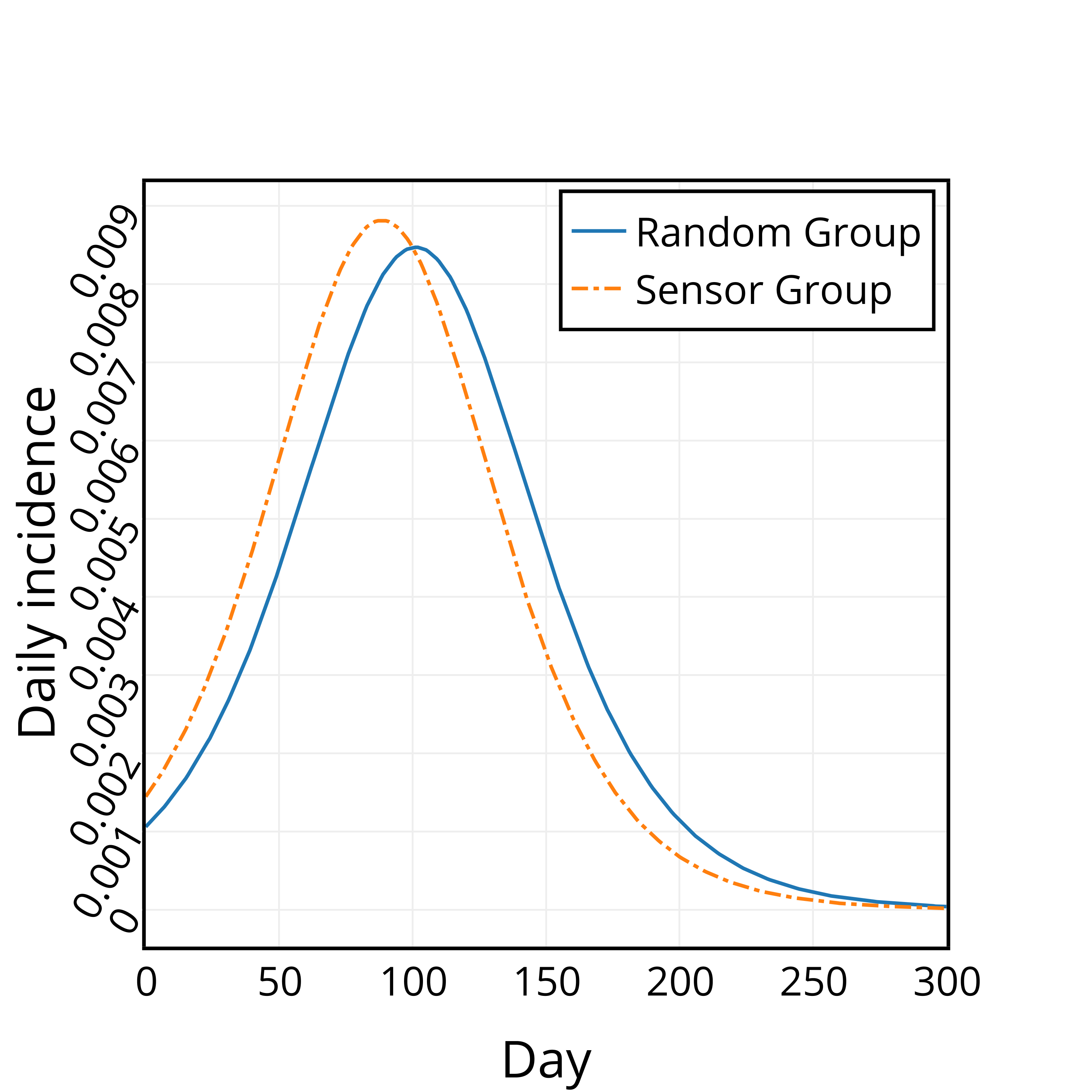}
	\includegraphics[width=2.2in]{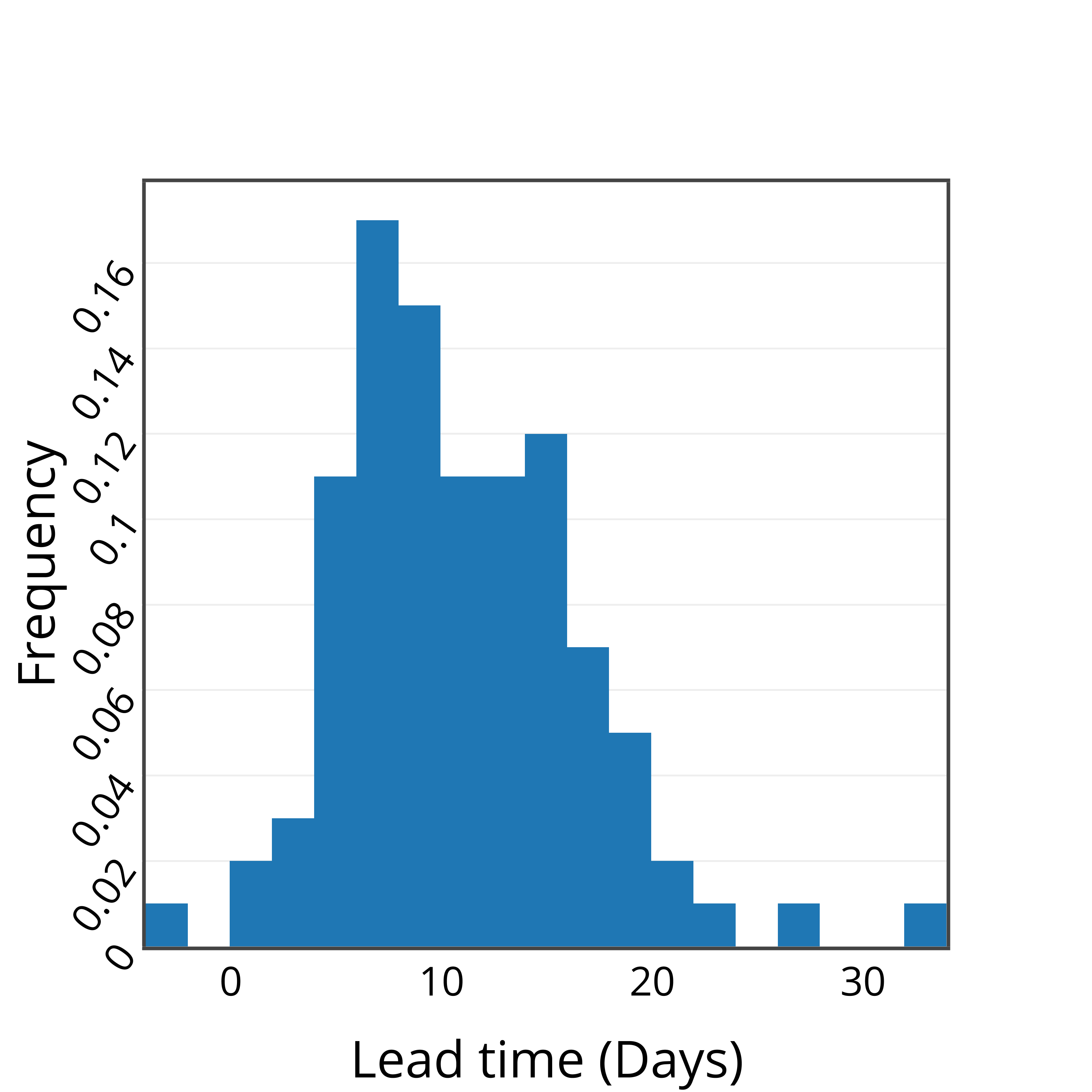}
	\caption{\textbf{Illustration of the Friend-of-Friend
		approach~\cite{christakis:10:sensor} on the \oregon{} dataset.} (a) True
		daily incidence curve (left), (b) fitted daily incidence curve with
		logistic function (middle), and (c) distribution of lead time over 100
		experiments (right). Note that there is a non-zero lead time observed,
		i.e., the peak of the sensor curve occurs earlier than the peak of the
		curve for the random group.}
		\label{fig_degreeOregonSI}\vspace{-0.15in}
\end{figure*}

\begin{figure*}[!t]
	\centering
	\includegraphics[width=2.2in]{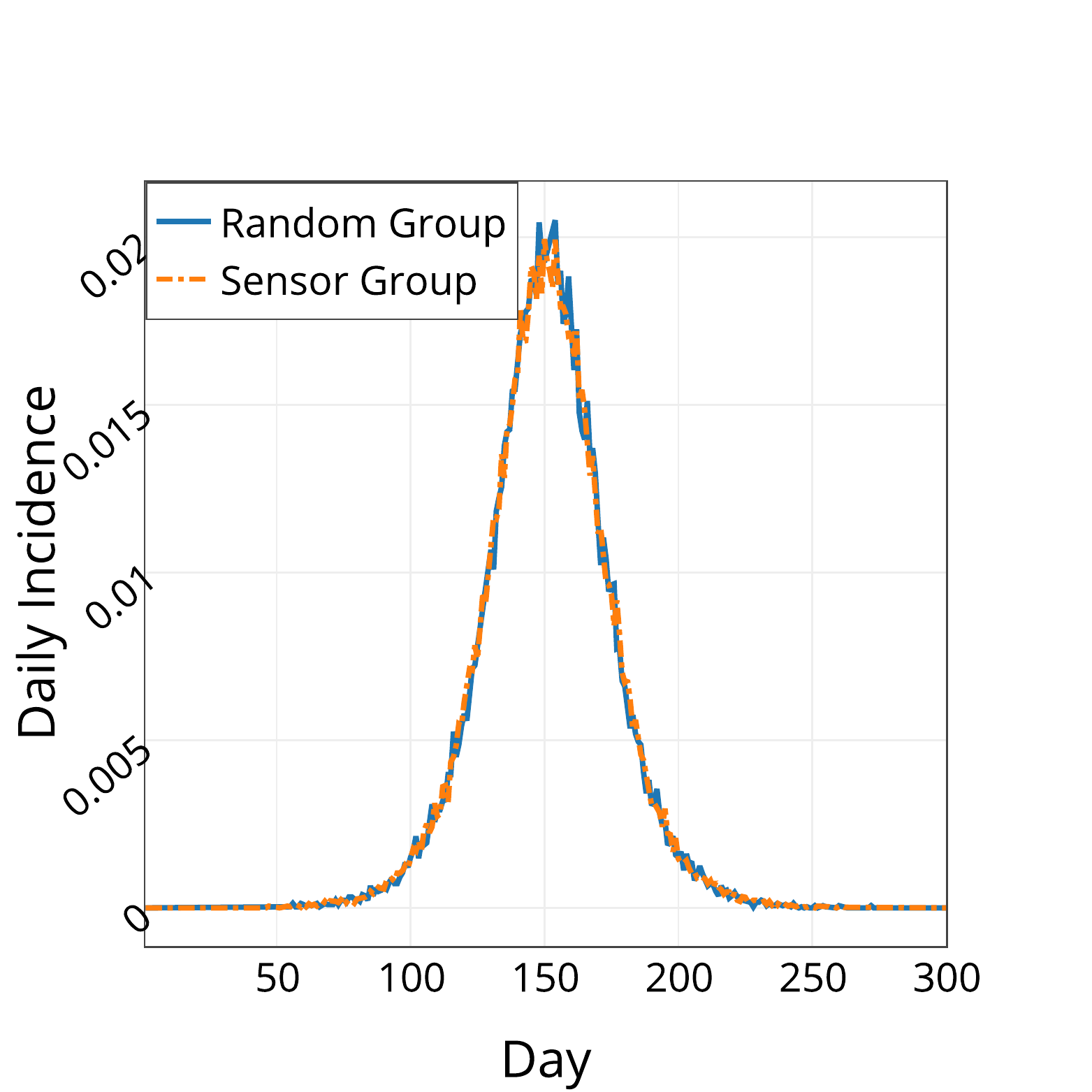}
	\includegraphics[width=2.2in]{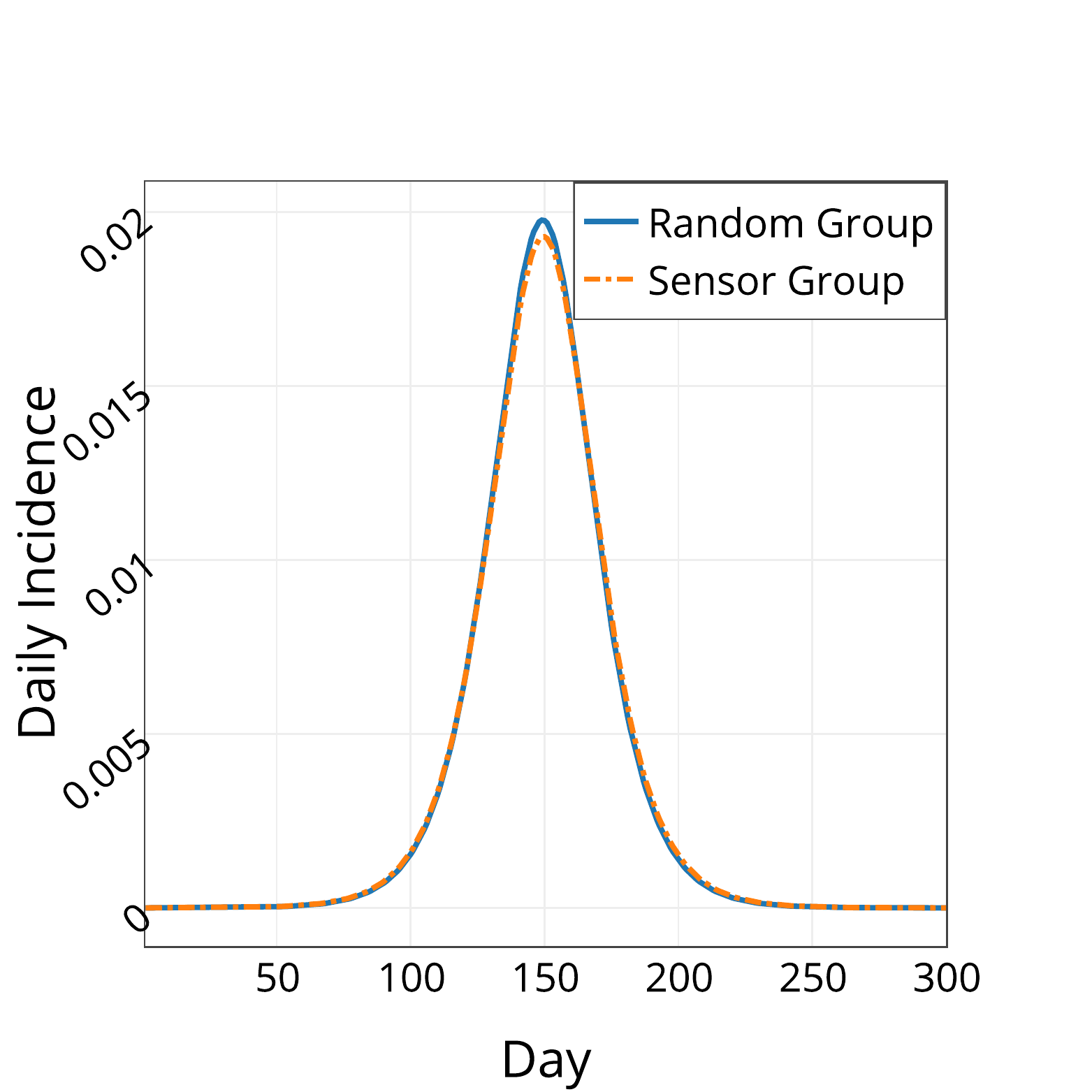}
	\includegraphics[width=2.2in]{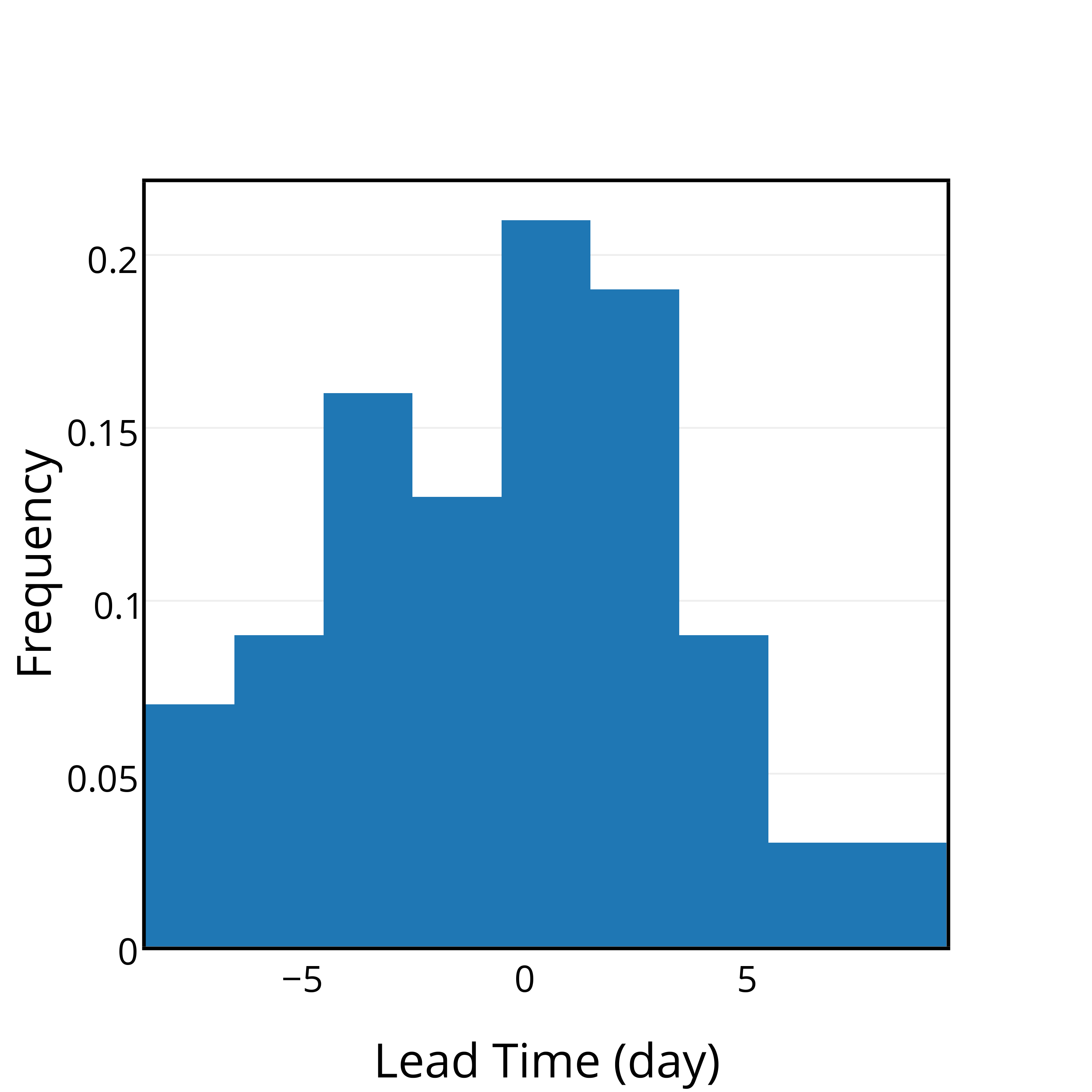}
	\caption{\textbf{Illustration of the Friend-of-Friend approach on the
		\miami{} dataset.} (a) True daily incidence curve (left), (b) fitted
		daily incidence curve with logistic function (middle), and (c)
		distribution of lead time over 100 experiments (right). Note that this
		experiment does not reveal any lead time.}
		\label{fig_degreeMiamiSEIR}\vspace{-0.15in}
\end{figure*}

Figures~\ref{fig_degreeOregonSI} and~\ref{fig_degreeMiamiSEIR} depict the
results of experiments we did on two large contact networks---\oregon{} and
\miami{} (see Table~\ref{tab:dataset} for details)---using the \seir{} model. We
formed the sensor set using the approach given in~\cite{christakis:10:sensor}
and measured the \emph{average lead time} of the peaks for 100 runs (hence the
results are robust to stochastic fluctuations). For the \oregon{} dataset,
Fig.~\ref{fig_degreeOregonSI} shows that there is a 11 days lead time on average
for the peak in the sensor set with respect to the random set (see
Fig.~\ref{fig_degreeOregonSI}(c)). In contrast, for the \miami{} dataset, no
lead time for the sensor set is observed (see
Fig.~\ref{fig_degreeMiamiSEIR}(c)). 

There may be several possible reasons for these inconsistencies. First, the
'Friend-of-Friend' approach implicitly assumes that the lead time always
increases as we add more sensors into the set. Second, the lead time observation
is assumed to be independent of the underlying network topology structures,
which is clearly not the case. Finally, and most importantly, the work
in~\cite{christakis:10:sensor} does not formally define the problem it is trying
to solve, i.e., what objective does the sensor set optimize? 

In this paper, we study the same problem: forecasting the flu outbreak by
monitoring the social sensors. We present our formalisms and principled
solutions, which avoid the shortcomings of the Friend-of-Friend approach and
have several desirable properties. In particular, our contributions are:
\begin{enumerate}[itemsep=-1pt]
	\item We formally pose and study three variants of the sensor
		set selection problem.  
	\item We give an efficient heuristic based on the notion of graph dominators
		which solves one variant of the social sensor selection problems.
	\item We conduct extensive experiments on city-scale datasets based on
		detailed microscopic simulations, demonstrating improved lead time over
		competitors (including the Friend-of-Friend approach
		of~\cite{christakis:10:sensor}).
	\item We design surrogate/proxy social sensors using demographic information
		so that it is easy to deploy in practice without the knowledge of the
		full contact network.
\end{enumerate}
To the best of our knowledge, our work is the \emph{first} to systematically
formalize the problem of picking appropriate individuals to monitor and
forecast the disease spreading over a social contact network. The rest of this
paper is organized as follows. In Section~\ref{sec:background}, we introduce
some background knowledge about disease propagation models and social sensors
for disease outbreak monitoring and forecasting. The problem we intend to solve
in this paper is formulated in Section~\ref{sec:formulation}, followed by our
proposed solution in Section~\ref{sec:approach}. In Section~\ref{sec:exp}, we
show our experimental results on several large US city datasets. Finally, we
survey the related work in Section~\ref{sec:related}, and conclude this paper in
Section~\ref{sec:conclusion}.

\section{Background}
\label{sec:background}

\subsection{Epidemiology Fundamentals}
\label{sec:epimodel}

The most fundamental computational disease model is the so-called
`Susceptible-Infected' (SI) model. Each individual (e.g.\ node in the disease
propagation network) is considered to be in one of two states: Susceptible
(healthy) or Infected. Any infected individual may infect each of its neighbors
\emph{independently} with probability $\beta$. Also, the SI model assumes every
infected individual stays infected forever. If the disease propagation network
is a clique of $N$ nodes, the dynamic process of the SI model can be
characterized by the following differential equation: 
$$
\frac{d I}{dt} = \beta \times (N- I) \times I
$$
where $I$ is the number of infected nodes at time $t$. The justification is as
follows: there are a total of $I (N-I)$ encounters between infected nodes and
susceptible nodes, and each of these encounters successfully propagate the disease
with a probability of $\beta$. It is easy to prove that the solution for $I$ is
the logistic or sigmoid function, and its derivative (or the number of
\emph{new} infections per unit time) is symmetric around the peak.

Another popular disease model that we use in this paper is the so-called SEIR
model where a node in the disease propagation network is in one of the
\emph{four} states, corresponding to Susceptible-Exposed-Infected-Recovered.
Compared to the SI model, this approach models diseases with a latent exposed
phase, during which an individual is infected but not infectious to others, and
a cured or recovered phase where the infected individuals are healed and
considered to be immune to the disease under consideration. The dynamic process
of the SEIR model can be described by the following group of differential
equations:
\begin{alignat*}{4}
	& \frac{dS}{dt} && = - \beta S I && \frac{dI}{dt} && = \alpha E - \gamma I \\
	& \frac{dE}{dt} && = \beta S I - \alpha E \quad \quad && \frac{dR}{dt} && = \gamma I~,
\end{alignat*}
where $S$, $E$, $I$ and $R$ denote the number of individuals in the
corresponding states at time $t$, and $S + E + I + R = N$. Here $\beta$, $\alpha$ and
$\gamma$ represent the transition rates between the different states.
Notice that since we are considering disease epidemics during a short period of
time in this paper, we ignore the birth and death rates in the standard SEIR
model here.

\subsection{Social Sensors for Disease Outbreaks}
\label{sec:sensor}
Motivated by complicated public health concerns during the initial stages of a
pandemic (other than just detecting if there is an epidemic at
all)~\cite{nsubuga06}, public health officials are usually interested in the
questions: will there be a large disease outbreak? Or, has the epidemic reached
its peak? These are important questions from a public health
perspective~\cite{cdc-flu}; it can help determine if costly interventions are
needed (e.g., school closures), the strategies to organize vaccination
campaigns and distributions, locations to prioritize efforts to minimize new
infections, the time to issue advisories, and in general how to better engineer
health care responses.

A social sensor is a set of individuals selected from the population which
could indicate the outbreak of the disease under consideration, thus give early
warnings. Christakis and Fowler~\cite{christakis:10:sensor} first proposed the
notion of social network sensors for monitoring flu based on the friendship
paradox: your friends have more friends than you do. Alternatively, it can be
represented as popular friends of a random person could have higher degrees than
that of the random person in the friendship network. They proposed to use the
set of friends nominated by the individuals randomly sampled from the population
as the social sensor. 

\section{Problem Formulation}
\label{sec:formulation}
Inspired by the concept of social sensors, in this paper, we cast the public
health concerns as a disease outbreak prediction problem with social sensors,
To be more specific, let $G = (V, E)$ be a social contact network where $V$ and
$E$ represent the vertex set and edge set respectively, and we focus on SEIR
process here. We use $f(S)$ to denote the probability that at least one vertex
in the sensor set $S$ gets infected, starting the disease spread from a random
initial vertex. 

The most basic problem in such a setting is the \emph{early detection} problem,
in which the goal is to select the smallest sensor set $S$ so that some vertices
in $S$ gets infected within the first $d$ days of the disease outbreak in the
network $G$ with probability at least $\epsilon$ (here, $d$ and $\epsilon$ are
given parameters)---this can be used to detect if there is an epidemic at all.
This problem can be viewed as a special case of the detection problem in
\cite{Leskovec@KDD07}, and can be solved within a constant factor by a greedy
submodular function maximization algorithm. As we show later, our optimization
goal is \textit{non-linear} and \textit{not submodular}, and hence the approach
in~\cite{Leskovec@KDD07} can not be directly applied. Importantly, the early
detection problem does not capture the more important issues about the disease
characteristics of relevance to public health officials, and therefore we do not
explore this further. For example, just detecting an infection in the population
is generally not enough reason for actually doing an expensive intervention by
the public health officials (as the disease might not spread and disappear
soon). But knowing that the infection will still grow further and peak, gives
justification for robust infection control measures.

In our formulation, we refer the term \emph{epicurve} $I(t)$ as the time series
of the number of infections by day. The \emph{peak} of an epicurve is its
maximum value, i.e., \ $\max_t I(t)$. Note that it is possible for an epicurve to
have multiple peaks, but for most epidemic models in practice, the corresponding
epicurves usually have a single peak. The derivative of the $I(t)$ with respect
to $t$ is called the \emph{daily incidence} curve (number of new infections per
day). The ``time of peak'' of the epicurve corresponding the entire population
is the time when the epicurve first reaches its peak, and is denoted by $t_{pk}
= \argmax_t I(t)$. Similarly, we use $t_{pk}(S)$ to denote the time-of-peak of
the epicurve restricted only to a set $S$. The lead time of the epicurve peak
for sensor set $S$ compared to the entire population is then simply $t_{pk} -
t_{pk}(S)$. The problem we study in this paper is: 
\begin{quote}
	\textbf{$(\epsilon, k)$-Peak Lead Time Maximization (PLTM)} \\
	\textbf{\emph{Given:}} Parameters $\epsilon$ and $k$, network $G$, and the
	epidemic model \\
	\textbf{\emph{Find:}} A set of nodes $S$ from $G$ such that  
	\begin{align*}
		S = & \argmax_S  E[t_{pk}-t_{pk}(S)] \\
		\mbox{s.t.} &~f(S)\geq\epsilon,~|S|=k
	\end{align*}
\end{quote}
Here, $k$ is the budget, i.e.\ the required size of sensor set. Notice that we
need the $f(S)$ constraint so that we only choose sets which have a minimum
probability of capturing the epidemic---intuitively, there may be some nodes
which only get infected infrequently, but the time they get infected during the
disease propagation might be quite early. Such nodes are clearly not good
`sensors'.

\section{Proposed Approach}
\label{sec:approach}
Unfortunately, the peak of an epicurve is a high variance measure, making it
challenging to address directly. Further, the expected lead time,
$E[t_{pk}-t_{pk}(S)]$ is not non-decreasing (w.r.t. $|S|$) and non-submodular, in general.
Hence we consider a different, but related problem, as an intermediate step.
Let $t_{\mathit{inf}}(v)$ denote the expected infection time for node $v$, given
that the epidemic starts at a random initial node. Then:
\begin{quote}
	\textbf{$(\epsilon, k)$-Minimum Average Infection Time (\textsc{MAIT})} \\
	\textbf{\emph{Given:}} Parameters $\epsilon$ and $k$, network $G$, and the
	epidemic model \\
	\textbf{\emph{Find:}} A set $S$ of nodes such that  
	\begin{align*}
		S = & \argmin_S  \sum_{v\in S} t_{\mathit{inf}}(v)/|S| \\
		\mbox{s.t.} &~f(S)\geq\epsilon,~|S|=k
	\end{align*}
\end{quote}

\par \noindent
\textbf{Justification:} In contrast to the peak, note that the \emph{integral}
of the epicurve restricted to $S$, normalized by $|S|$, corresponds to the
\emph{average infection time} of nodes in $S$, which is another useful metric
for characterizing the epidemic. Further, if the epicurve has a sharp peak,
which happens in most real networks, and for most disease parameters, the
average infection time is likely to be close to $t_{pk}$. 

\textbf{Approximating MAIT:}
The MAIT problem involves $f(S)$, which can be seen to be submodular, following
the same arguments as in \cite{Kempe03Maximizing}, and can be maximized using a
greedy approach.  However, the objective function --- average infection time
$\sum_{v\in S} t_{\mathit{inf}}(v)/|S|$ is non-linear as we keep adding nodes to
$S$, which makes this problem challenging, and the standard greedy approaches
for maximizing submodular functions, and their extensions~\cite{Krause@ICML08}
do not work directly. In particular, we note that selecting a sensor set $S$
which minimizes $\sum_{v\in S} t_{\mathit{inf}}(v)$ (with $f(S)\geq\epsilon$)
might not be a good solution, since it might have a high average infection time
$\sum_{v\in S} t_{\mathit{inf}}(v)/|S|$.  We discuss below an approximation
algorithm for this problem.  For graph $G=(V,E)$, let $m=|E|$, $n=|V|$.

\begin{lemma}
\label{lemma:mait}
	It is possible to obtain a bi-criteria approximation $S\subseteq V$ for any
	instance of the $(\epsilon,k)$-\textsc{MAIT} problem on a graph $G=(V,E)$,
	given the $t_{\mathit{inf}}(\cdot)$ values for all nodes as input, such that
	$\sum_{v\in S} t_{\mathit{inf}}(v)$ is within a factor of two of the
	optimum, and $f(S)\geq c\cdot\epsilon$, for a constant $c$.  The algorithm
	involves $O(n^2\log{n})$ evaluations of the function $f(\cdot)$.
\end{lemma}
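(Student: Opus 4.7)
The plan is to treat $(\epsilon,k)$-\textsc{MAIT} as a bi-criteria submodular-cover problem: minimize the modular cost $c(S) = \sum_{v \in S} t_{\mathit{inf}}(v)$ subject to the submodular constraint $f(S) \geq \epsilon$ and the cardinality bound $|S| \leq k$. The first step is to record that $f$ is monotone and submodular, which follows from the live-edge / possible-world coupling of Kempe, Kleinberg, and Tardos already cited in the paper: after conditioning on a random realization of the epidemic dynamics, $f(S)$ is the indicator that $S$ intersects the set of eventually-infected nodes, which is a coverage function and hence monotone submodular; averaging over realizations preserves both properties. With this in hand, the MAIT instance fits the template of minimizing a linear cost subject to a submodular coverage constraint, for which bi-criteria greedy schemes are standard.

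The algorithm itself guesses the optimum cost $T^\star$ by a dyadic search over $O(\log n)$ candidate budgets $T$. For each candidate $T$, I run the standard cost-constrained greedy: starting from $S = \emptyset$, repeatedly add the node $v$ with the largest marginal gain $f(S \cup \{v\}) - f(S)$ among nodes that keep $c(S \cup \{v\}) \leq T$ and $|S| \leq k$, stopping when $f(S) \geq c \cdot \epsilon$ is achieved or no node can legally be added. I output the set from the smallest budget $T$ for which this procedure succeeds, with $c = 1 - 1/e$.

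For correctness, the key fact is that when $T \geq 2T^\star$, the Sviridenko-style analysis of cost-constrained greedy for monotone submodular maximization guarantees $f(S) \geq (1 - 1/e)\, f(S^\star) \geq (1 - 1/e)\,\epsilon$: the doubled budget absorbs the "last element" loss that otherwise costs a factor of two, and submodularity supplies the remaining $(1 - 1/e)$. Hence some $T$ at most $2 T^\star$ succeeds in the dyadic grid, and the returned $S$ has cost within a factor of two of $T^\star$ while satisfying $f(S) \geq (1 - 1/e)\epsilon$. The running time is dominated by the greedy inner loop, which performs $O(n)$ iterations with $O(n)$ evaluations of $f$ each, i.e., $O(n^2)$ per candidate budget; $O(\log n)$ candidates then give the stated $O(n^2 \log n)$ bound on evaluations of $f$.

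The main obstacle I expect is the bi-criteria analysis itself. The cardinality-constrained greedy bound $(1 - 1/e)$ does not apply verbatim because nodes carry heterogeneous costs, while the off-the-shelf knapsack bound $\tfrac{1}{2}(1 - 1/e)$ is uni-criteria and weaker than what the lemma claims. Recovering a full constant-factor guarantee on $f$ simultaneously with the factor-two guarantee on cost requires the partial-greedy / budget-doubling argument, and the interplay of the cardinality constraint $|S| \leq k$ with the knapsack constraint must be handled by treating the pair as a single combined feasibility system rather than applying the two separately.
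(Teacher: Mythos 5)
Your proposal has the same skeleton as the paper's proof: guess the optimal budget over $O(\log n)$ geometric candidates, solve a constrained monotone submodular maximization problem for each guess by a greedy procedure costing $O(n^2)$ evaluations of $f$, and return the cheapest successful guess. The submodularity of $f$ via the live-edge coupling of Kempe et al.\ is also exactly what the paper asserts. However, the two places where the actual work happens are left with genuine gaps. First, the greedy rule you state --- add the node of largest \emph{raw} marginal gain $f(S\cup\{v\})-f(S)$ subject to the cost cap --- is not the rule to which the Sviridenko / Khuller--Moss--Naor analysis applies; that analysis requires selection by marginal gain \emph{per unit cost}. With raw gain, even a doubled budget can be exhausted by a couple of high-cost, low-density elements while many cheap elements realizing the optimum are never reached, so the claimed $(1-1/e)\epsilon$ bound does not follow. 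Second, and more importantly, you explicitly flag the joint handling of the knapsack constraint $\sum_{v\in S}t_{\mathit{inf}}(v)\le T$ and the cardinality constraint $|S|\le k$ as ``the main obstacle I expect'' but do not resolve it. That resolution is precisely the content of the paper's proof: each phase is cast as submodular maximization under \emph{two} linear packing constraints ($\sum_v t_{\mathit{inf}}(v)x(v)\le B'k$ and $\sum_v x(v)\le k$) and the multi-constraint guarantee of Azar et al.\ is invoked as a black box to get $f(S)\ge c\,\mu(B')$ with both constraints satisfied.

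There is also a mismatch with what the lemma is actually protecting. \textsc{MAIT} asks for the \emph{average} infection time with $|S|=k$; you minimize the \emph{sum} with $|S|\le k$. The paper's padding step --- topping the returned set up to exactly $k$ nodes using the nodes of smallest $t_{\mathit{inf}}$, which adds at most another $B'k$ to the sum --- is exactly what converts the per-phase bound $\sum_{v\in S}t_{\mathit{inf}}(v)\le B'k$ into the factor-two bound on the average, $\sum_{v\in S}t_{\mathit{inf}}(v)/|S|\le 2B_{\mathit{opt}}$, and is the sole source of the factor two. Without it, a small returned set can have a small sum but a large average, which is the failure mode the paper explicitly warns against just before the lemma. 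Finally, your factor of two does not survive the dyadic grid: the smallest power of two exceeding $2T^\star$ can be close to $4T^\star$, whereas the paper uses a $(1+\delta)$-spaced grid so that the guess itself costs only a $(1+\delta)$ factor and the two comes from padding, not from budget doubling.
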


\begin{proof}(Sketch)
	Let $t_{\mathit{inf}}(v)$ denote the expected infection time of $v\in V$,
	assuming the disease starts at a random initial node. Let $B_{\mathit{opt}}$
	be the average infection time value for the optimum; we can ``guess'' an
	estimate $B'$ for this quantity within a factor of $1+\delta$, by trying out
	powers of $(1+\delta)^i$, for $i\leq\log{n}$, for any $\delta>0$, since
	$B_{opt}\leq n$. We run $O(\log{n})$ ``phases'' for each choice of $B'$.

	Within each phase, we now consider the submodular function maximization
	problem to maximize $f(S)$, with two linear constraints: the first is $\sum
	t_{\mathit{inf}}(v)x(v)\leq B'k$ and $\sum_v x(v)\leq k$, where $x(\cdot)$
	denotes the characteristic vector of $S$. Using the result of Azar et al.
	\cite{azar:icalp12}, we get a set $S$ such that $f(S)\geq c\mu(B')$, for a
	constant $c$, and $\sum_{v\in S} t_{\mathit{inf}}(v)\leq B'k$ and $|S|\leq
	k$, where $\mu(B')$ denotes the optimum solution corresponding to the choice
	of $B'$ for this problem. If we have $|S|<k$, we add to it $k-|S|$ nodes
	with the minimum $t_{\mathit{inf}}(\cdot)$ values, which are not already in
	$S$, so that its size becomes $k$. Note that for the new set $S$, we have
	$\sum_{v\in S} t_{\mathit{inf}}(v)\leq 2B'k$, since the sum of the infection
	times of the nodes added to $S$ is at most $B'k$.

	Note that the resulting set $S$ corresponds to one ``guess'' of $B'$. We
	take the smallest value of $B'$, which ensures $f(S)\geq c\epsilon$. It
	follows that for this solution $S$, we have $\sum_{v\in S}
	t_{\mathit{inf}}(v)/|S|\leq 2B_{opt}$ and $|S|=k$.  The algorithm of Azar et
	al. \cite{azar:icalp12} involves a greedy choice of a node each time; each
	such choice involves the evaluation of $f(S')$ for some set $S'$, leading to
	$O(n^2)$ evaluations of the function $f(\cdot)$; since  there are
	$O(\log{n})$ phases, the lemma follows.
\end{proof}

\paragraph{Heuristics}

Though Lemma~\ref{lemma:mait} runs in polynomial time, it is quite impractical
for the kinds of large graphs we study in this paper because of the need for
super-quadratic number of evaluations of $f(\cdot)$. Therefore, we consider
faster heuristics for selecting sensor sets. The analysis of
Lemma~\ref{lemma:mait} suggests the following significantly faster greedy
approach: pick nodes in non-decreasing $t_{\mathit{inf}}(\cdot)$ order till the
resulting set $S$ has $f(S)\geq\epsilon$. In general, this approach might not
give good approximation guarantees. However, when the network has ``hubs'', it
seems quite likely that the greedy approach will work well. However, even this
approach requires repeated evaluation of $f(S)$, and can be quite slow. The
class of social networks we study have the following property: nodes $v$ which
have low $t_{\mathit{inf}}(v)$ are usually hubs and have relatively high
probability of becoming infected. This motivates the following simpler and much
faster heuristic, referred to as the \textbf{Transmission tree (TT) based
sensors} heuristic:
\begin{enumerate*}
	\item generate a set $\mathcal{T}=\{T_1,\ldots,T_N\}$ of dendrograms; a dendrogram $T_i=(V_i, E_i)$ is a subgraph of $G=(V, E)$, where $V_i$ is the set of infected nodes and an edge $(u,v) \in E$ is in $E_i$ iff the disease is transmitted via $(u, v)$. 
	\item for each node $v$, compute $d_{v}^{i}$, which is its depth in $T_i$,
		for all $i$, if $v$ gets infected in $T_i$;
	\item compute $t_{\mathit{inf}}(v)$ as the average of the $d_{v}^{i}$, over
		all the dendograms $T_i$, in which it gets infected;
	\item discard nodes $v$ with $t_{\mathit{inf}}(v)<\epsilon_0$, where
		$\epsilon_0$ is a parameter for the algorithm;
	\item order the remaining nodes $v_1,\ldots,v_{n'}$ in non-decreasing
		$t_{\mathit{inf}}(\cdot)$ order (i.e., $t_{\mathit{inf}}(v_1)\leq
		t_{inf}(v_2)\leq\ldots\leq t_{inf}(v_{n'})$) \item Let
			$S=\{v_1,\ldots,v_k\}$
\end{enumerate*} 

We also use a faster approach based on dominator trees, which is motivated by
the same greedy idea. We referred it as the \textbf{Dominator tree (DT) based
sensors} heuristic: 
\begin{enumerate*}
	\item generate dominator trees corresponding to each dendrogram;
	\item compute the average depth of each node $v$ in the dominator trees (as
		in the transmission tree heuristic);
	\item discard nodes whose average depth is smaller than $\epsilon_0$;
	\item we order nodes based on their average depth in the dominator tree, and
		pick $S$ to be the set of the first $k$ nodes.
\end{enumerate*}
Formally, the dominator relationship is defined as follows. A node $x$ dominates
a node $y$ in a directed graph iff all paths from a designated start node to
node $y$ must pass through node $x$. In our case, the start node indicates the
source of the infection or disease. Consider Fig.~\ref{fig:domtree} (left), a
schematic of a social contact network. All paths from node A (the designated
start node) to node H must pass through node B, therefore B dominates H. Note
that a person can be dominated by many other people.  For instance, both C and F
dominate J, and C dominates F.  A node $x$ is said to be the unique immediate
dominator of $y$ iff $x$ dominates $y$ and there does not exist a node $z$ such
that $x$ dominates $z$ and $z$ dominates $y$.  Note that a node can have at most
one immediate dominator, but may be the immediate dominator of any number of
nodes. The dominator tree $D = (V^D,E^D)$ is a tree induced from the original
directed graph $G = (V^G,E^G)$, where ${V^D} = {V^G}$, but an edge $(u
\rightarrow v) \in E^D$ iff $u$ is the immediate dominator of $v$ in $G$. Figure
~\ref{fig:domtree} (right) shows an example dominator tree.

The computation of dominators is a well studied topic and we adopt the
Lengauer-Tarjan algorithm ~\cite{LengTarjan} from the Boost graph library
implementation.  This algorithm runs in $O((|V|+|E|) \log (|V|+|E|))$ time,
where $|V|$ is the number of vertices and $|E|$ is the number of edges.

\begin{figure}[!t]
	\centering
	\includegraphics[width=3.3in]{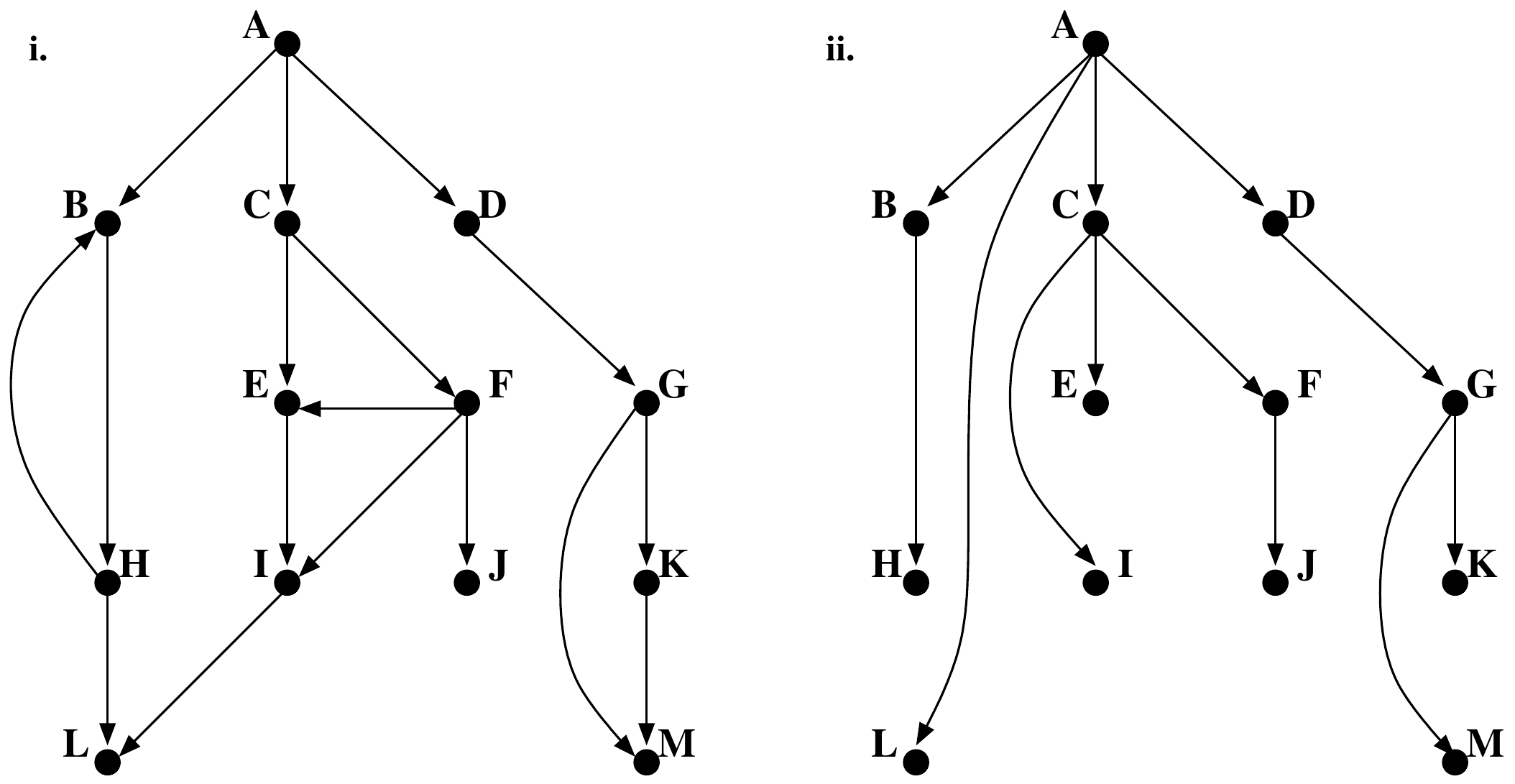}
	\caption{\textbf{(i) An example graph and (ii) its dominator tree.} In
	practice, the dominator will have a significantly reduced number of edges
	than the original graph.}
	\label{fig:domtree}\vspace{-0.15in}
\end{figure}

\section{Experimental Results}
\label{sec:exp}
Our experimental investigations focus on addressing the following questions:
\begin{enumerate*}
	\item How do the proposed approaches perform when forecasting the epidemic
		in terms of the lead time? (Section~\ref{infection-time})
	\item How large should our sensor set size be? (Section~\ref{sensor-size})
  	\item How many days are necessary to observe a stable lead time?
	  (Section~\ref{stability})
	\item What is the predictive power of the sensor set in estimating the
        epidemic curve over the full population? (Section~\ref{predict-epicurve})
  	\item Is it possible to employ surrogates for sensors? (Section~\ref{surrogates})
\end{enumerate*}

Table~\ref{tab:dataset} shows some basic network statistics of the datasets we
used in our experiments. The Oregon AS (Autonomous System) router graph is an
AS-level connectivity network inferred from Oregon
route-views~\cite{oregondata}. Although this dataset does not relate to
epidemiological modeling, we use it primarily as a testbed to understand how
(and if) graph topology affects our results due to the relatively small size and
neat graph structure; the rest of the datasets are generated with specific aim
at modeling epidemics in human populations.  These datasets are synthetic but
realistic social contact networks for six large cities in the United States.
Here, we briefly describe the major steps to generate these synthetic datasets
(see~\cite{barrett:wsc09,nature} for details): (i) a synthetic urban population
model is constructed by integrating a variety of public (e.g., US Census) and
commercial data (Dunn \& BradStreet), which is statistically equivalent to the
real population; (ii) activity sequences are constructed for each household and
each person by matching activity surveys. The activity data also specify the
type of each activity and the duration of performing it; (iii) activity
locations are assigned for each person, using land use data and activity choice
models; (iv) individuals are routed through the road network, which gives a
social contact network based on location co-occurrences.

\begin{table}[!t]
	\centering\vspace{-0.3in}
	\caption{\textbf{Statistics of datasets used in the experiments.}}
	\label{tab:dataset}
	\begin{tabular}{|l|c|c|c|}
	  \hline
	  Dataset  & Nodes     & Avg.\ deg &  Max deg\\
	  \hline
	  \oregon  & 10,670    & 4.12     &  2,312\\
	  \hline
	  \miami   & 2,092,147 & 50.38     &  425 \\
	  \hline
	  \boston & 4,149,279 & 108.32 & 437 \\
	   \hline
	  \dallas & 5,098,598 & 113.10  & 477 \\
	   \hline
	  \chicago & 9,047,574 & 118.83  & 507 \\
	   \hline
	  \losangeles & 16,244,426 & 113.08 & 463 \\
	   \hline
	  \newyork & 20,618,488 & 93.14 & 464 \\
	  \hline
	\end{tabular}
\end{table}

In our experimental study, we evaluated our two proposed approaches,
transmission tree based heuristic and dominator tree based heuristic. For
comparison, we also implemented two strategies as baseline methods: (i)
\textbf{Top-K high degree sensors} heuristic used in~\cite{christakis:10:sensor}
where a set $P \subseteq V$ is first sampled and for each $v \in P$ its $K$
neighbors with largest degree are selected and (ii) \textbf{Weighted degree (WD)
sensors} heuristic, which is similar to the previous heuristic except that the
$K$ neighbors are chosen based on largest weighted degree. The weight we use
here is the durations of the activities indicated by edges of the graphs in the
datasets mentioned in Table~\ref{tab:dataset}. However, since we don't have
these weights for the \oregon~dataset, we will omit the results of the WD
sensor heuristic on the \oregon~dataset.

Our primary figure of merit is the lead time, calculated as follows. For each
run of the disease model in a social contact network, we fit a logistic function
curve to the cumulative incidence of the chosen sensor set and a random
sampled set from $V$. Here, we use the random sampled set to represent the
entire population since for such large city-level datasets we used in our
experiments, it is usually impossible to track the entire population in
practice. We then derive daily incidence curves for both the sensor
set and the random set (we will refer this set as random set in the rest of this
paper). Let $t_s$ and $t_r$ represent the peak times of the daily incidence
curves for the sensor and random sets respectively, and the lead time is defined
as $\Delta t = t_r - t_s$. For all the experiments in this section, the
parameters for the epidemic simulations are set as follows unless specified. We
set $\epsilon = 0.8$ (see the definitions of the PLTM and MAIT problems) and flu
transmission rate to be $4.2 \times 10^{-5}$ for the SEIR disease model. The
size for the sensor set and random set ($k$) is $5\%$ of the entire population,
and the epidemic simulations start with five randomly infected vertices in the
networks. All the results were obtained by averaging across $1,000$ independent
runs.

\begin{figure*}[!t]
	\vspace{-0.15in}
	\begin{minipage}{0.495\textwidth}
		\centering
		\includegraphics[width=1.6in]{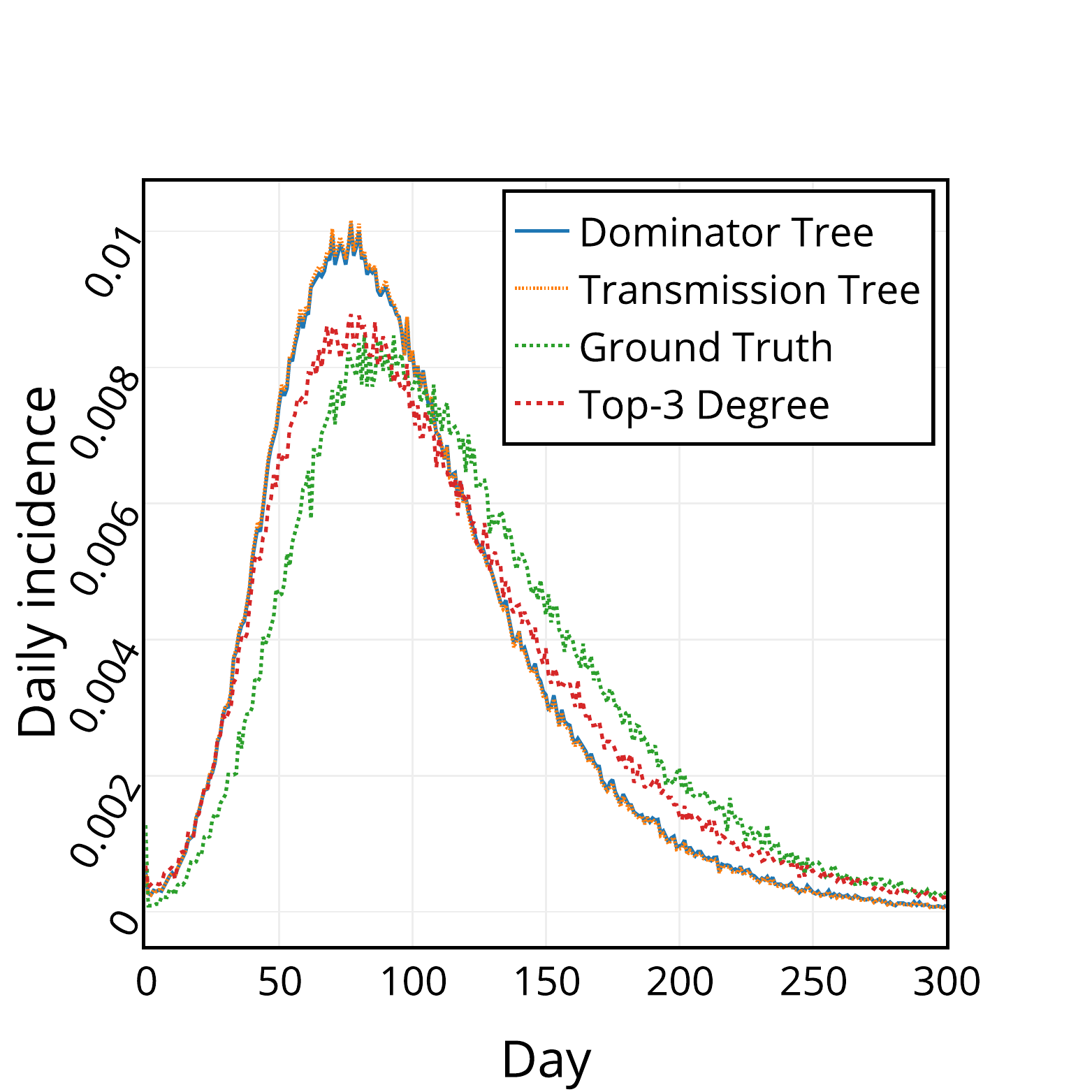}
		\includegraphics[width=1.6in]{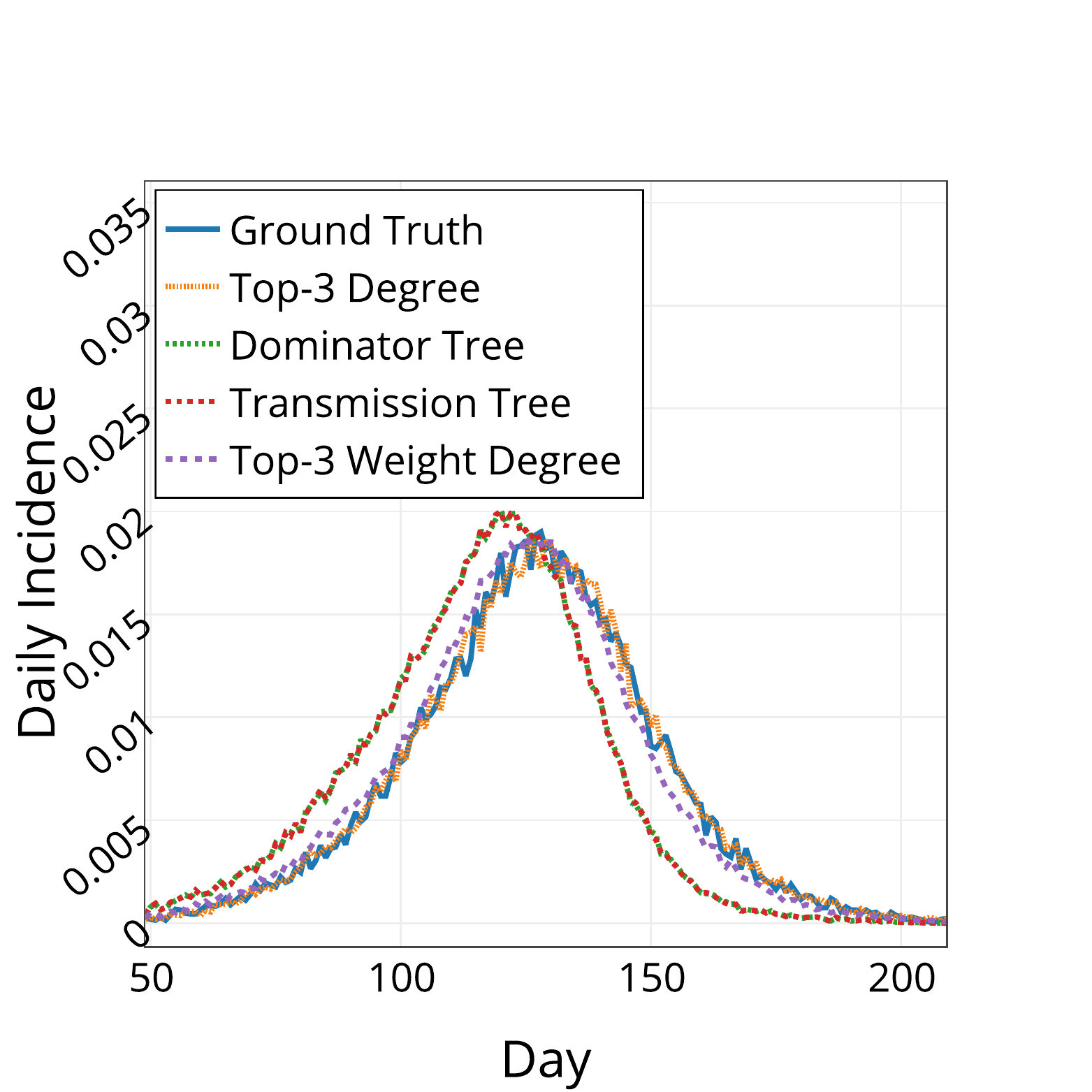}
		\caption{\textbf{Daily incidence of sensor sets selected by the heuristic
			approaches compared to the true daily incidence in the simulated
			epidemic on (a) \oregon~dataset (left), (b) \miami~dataset (right).}}
		\label{fig:shift-vs-size}
		\vspace{0.7cm}
	\end{minipage}
	\hfill
	\begin{minipage}{0.495\textwidth}
		\centering
		\includegraphics[width=1.6in]{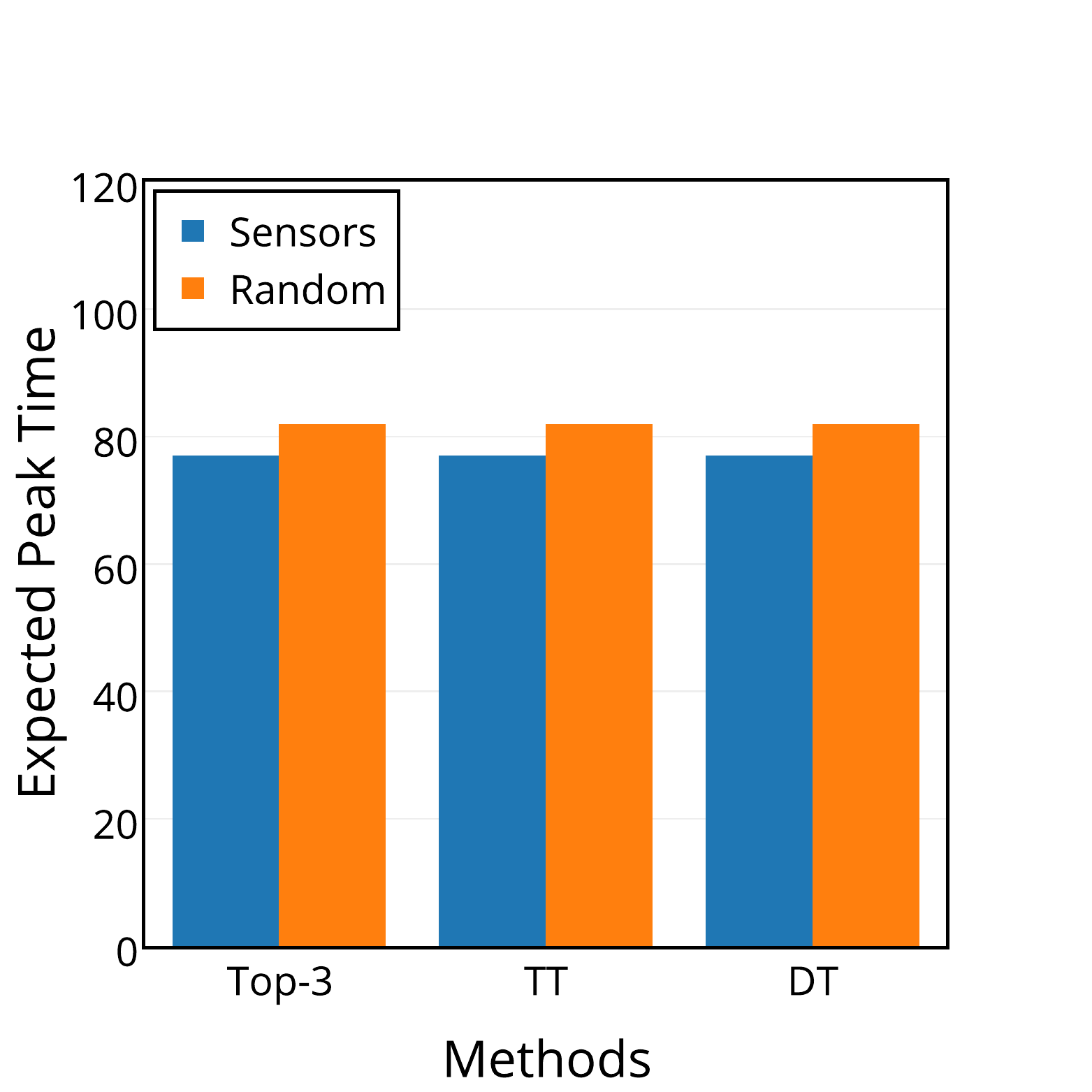}
		\includegraphics[width=1.6in]{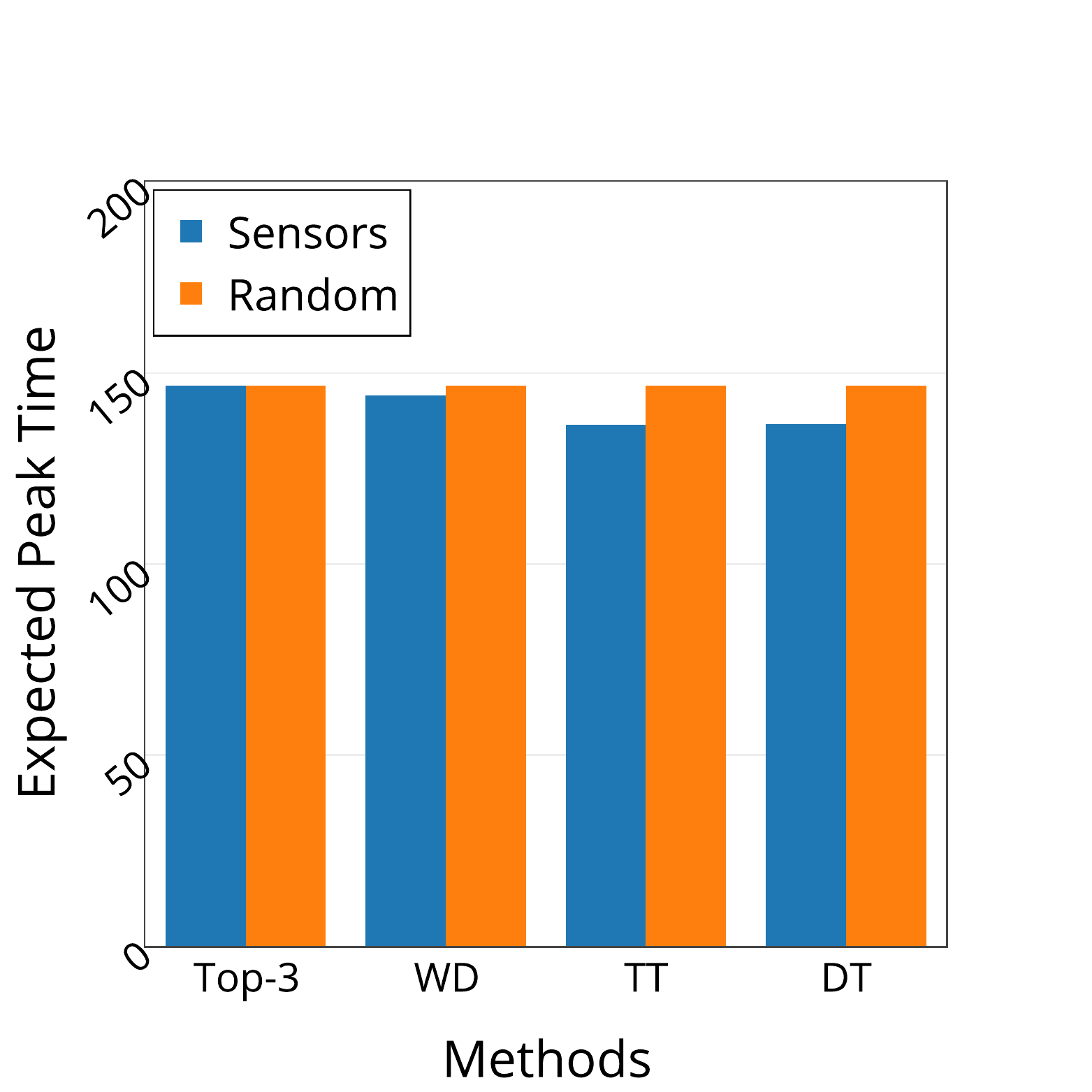}
		\caption{\textbf{The expected peak time of the daily incidence curve on
		(a) \oregon~dataset (left), (b) \miami~dataset (right).} Here Top-3, WD,
		TT, and DT denote Top-3 high degree, Top-3 weighted degree, Transmission
		tree based, and Dominator tree based heuristic respectively. }
		\label{fig_dailyInc4}
	\end{minipage}
\end{figure*}

\subsection{Performance of the predicted epidemic lead time}
\label{infection-time}

In this section, we study how our proposed heuristic approaches performs in
terms of the predicted epidemic lead time. We apply the two proposed approaches,
transmission tree (TT) and dominator tree (DT) based heuristics, and two base
line approach, Top-K high degree (Top-3 in our experiments) and weighted degree
(WD) based heuristics, to the \oregon~and \miami~datasets. As shown in
Table~\ref{tab:dataset}, \oregon~dataset clearly has a different network
topology structure compared to \miami~dataset, and here we use \oregon~dataset
to verify whether our proposed heuristics are robust to different network
topologies. In this experimental study, we set the flu transmission rate to
$0.05$ for the SEIR model in the \oregon~dataset due to its relatively
small size compared to the \miami~dataset. Fig.~\ref{fig:shift-vs-size} depicts
the daily incidence curves of the four sensor selection heuristics and the
random set on \oregon~and \miami~datasets, and Fig.~\ref{fig_dailyInc4}
describes the corresponding peak time of the daily incidence curves shown in
Fig.~\ref{fig:shift-vs-size}. As we can see from these figures, on
\oregon~dataset, the performance of the proposed heuristics and baseline
heuristics is comparable where they both predict the peak of the epicurves about
five days earlier when compared to the ground truth. However, on the
\miami~dataset, the proposed TT and DT heuristic approaches give a much larger
lead time, around 10 days, compared to the about two-day and almost zero day
lead time in the WD and Top-K baseline heuristics. This is because, as described
earlier, our approaches are precisely designed to try to pick vertices with
early expected infection time from the disease propagation network as social
sensors. We also study whether the number of the initial infected vertices will
affect the predicted lead time.  Table~\ref{tab_bcmp} shows the predicted lead
time of the two proposed and the two baseline heuristics for 1, 5 and 10 initial
infected vertices in the epidemic simulations. As the results in this table
shows, the number of initial infected vertices would not have too much impact on
the predicted lead time.

\begin{table*}[!t]
	\centering
	\caption {\textbf{Comparison of the lead time across four different social
		sensor selection heuristics when the number of initial infected vertices
		vary.}}
	\label{tab_bcmp}
	\begin{tabular} {|l|c|c|c|c|c|c|c|c|}
		\hline
		\multirow{2}{*}{Dataset} & \multirow{2}{*}{Seed} & \multicolumn{4}{c|}{Lead time} \\  \cline{3-6} 
		&  & Top-K degree & Weight degree  & Transmission tree & Dominator tree
		\\ \hline
		\multirow{3}{*}{\oregon} & 1 & 13.13 & n/a  & 10.10  & 9.91 \\ \cline{2-6}
		& 5 & 8.85 & n/a  & 7.93 & 7.75 \\ \cline{2-6}
		& 10 & 11.00 & n/a & 8.63 & 8.55 \\
		\hline
		\multirow{3}{*}{\miami} & 1 & 0.29  & 3.38  & 10.46  & 10.08 \\ \cline{2-6}
		& 5 & 0.39 & 3.41  & 10.15  & 10.19 \\ \cline{2-6}
		& 10 & 0.62 & 3.41 & 10.13 & 10.13\\
		\hline
	\end{tabular}
\end{table*}

To explain why the proposed social sensor selection heuristics work better, we
start from analyzing the structures of the disease propagation networks.
Comparing the graph statistics of \oregon~dataset with \miami~dataset shown in
Table~\ref{tab:dataset}, we can observe that the graph in the \oregon~dataset
has a quite different topology structure from the graphs in the \miami~datasets.
The graph in the \oregon~dataset has relatively small average degree but very
large maximum degree, which indicates this graph has star-like topology where
few of the central vertices have very large degrees. On the other hand, many
vertices in the graphs of the \miami~datasets have large degrees, and they
spread all over the entire graph. Thus, for the top-K degree based sensor
selection approach, it is relatively easy to include the central vertices
with high degrees into the sensor set in \oregon~dataset, but for the
transmission tree and dominator tree based approaches, whether the high degree
vertices are included into the sensor set will heavily depend on the choices of
initial seeds of the epidemics in the \oregon~network. Such central vertices
with high degree are usually very important for the epidemics in such star-like
networks, which explains why the top-K degree approach works better than the
transmission tree and dominator tree approaches. On the contrary, in the
\miami~dataset, the total number of vertices is large, and it is quite difficult
for top-K degree approach to select sensors that could represent the entire
graph only based on the local friend-friend information. However, the
transmission tree and dominator tree based sensor selection strategies take the
global epidemic spread information into account, which chooses the sensor set
that could represent the entire graph. That's why they perform better in term of
the lead time than the top-K degree based approach on the large simulated US
city networks. The results of this experiment further demonstrate that the
network topology must be considered when designing social sensor selection
strategies. They also show that the proposed TT and DT based sensor selection
heuristics are more robust to the underlying network topologies, and thus more
suitable to be deployed in practice, such as monitoring and forecasting
epidemics in large cities.

\begin{figure*}[!t]
	\begin{minipage}{0.66\textwidth}
		\centering
		\includegraphics[width=2.2in]{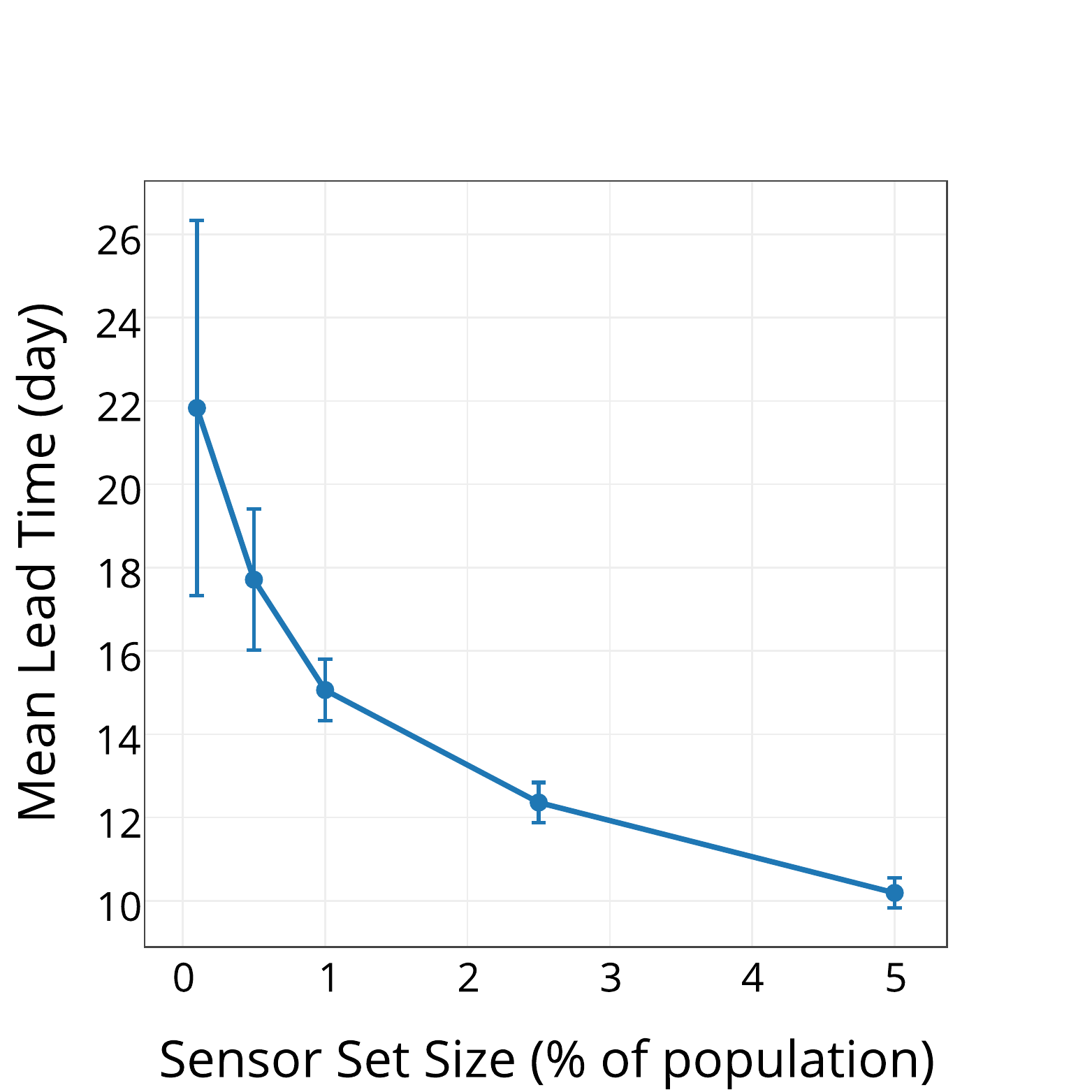}
		\includegraphics[width=2.2in]{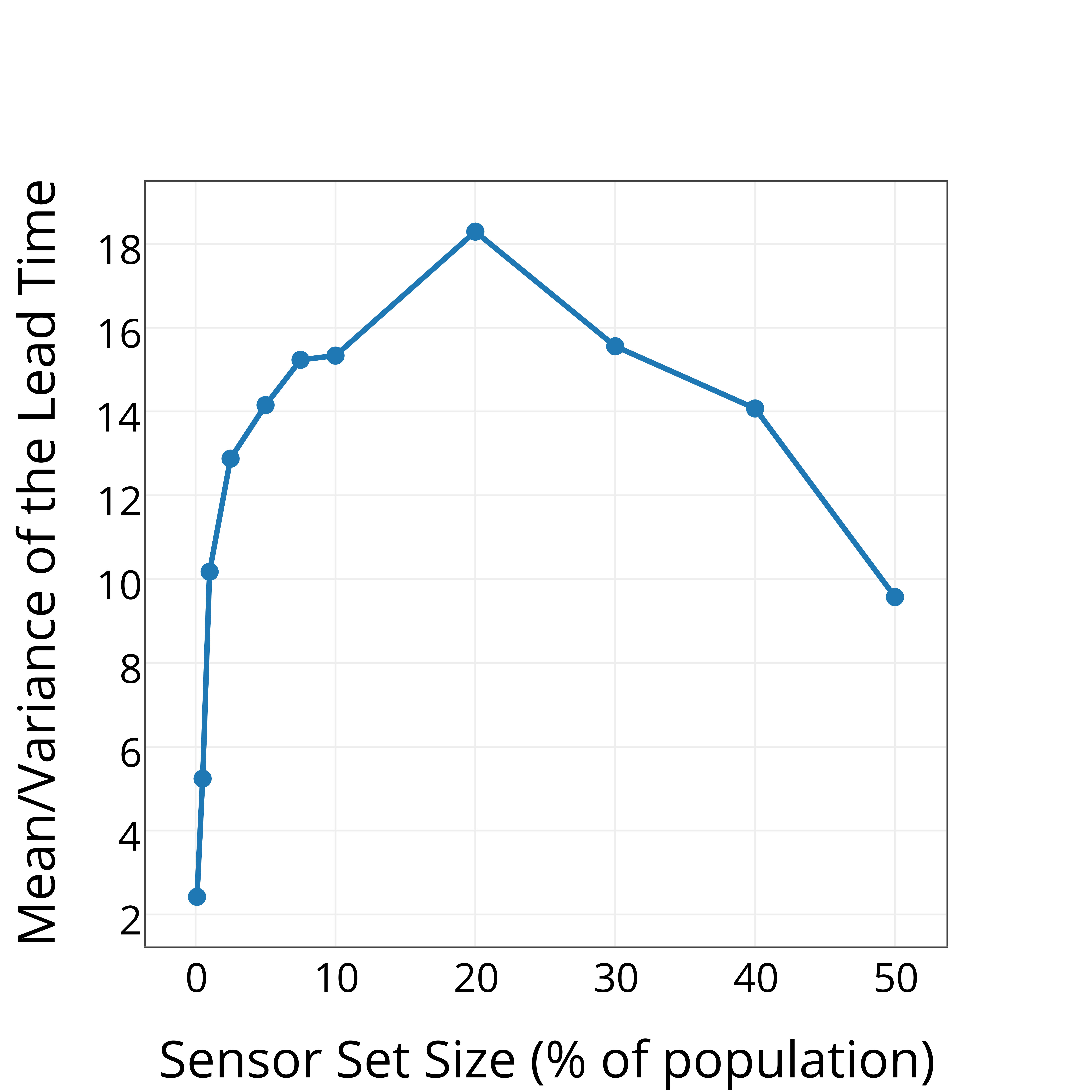}
		\caption{\textbf{Mean lead time (left) and inverse of variance-to-mean
			ratio (right) v.s.\ the sensor size for the \miami~dataset.} When
			sensor set size is less than $1.0\%$ of the entire population we
			observe higher (good) lead time, but also with high variances.
			Scaling the mean lead time by the variance, i.e., the reciprocal of
			the Fano factor, shows a clear peak with the sensor set size at
			approximately $20\%$ of the population, the position where we can
			obtain substantial gains in lead time with correspondingly low
			variances.}
		\label{size_miami}
	\end{minipage}
		\vspace{0.3cm}
	\hfill
	\begin{minipage}{0.33\textwidth}
		\includegraphics[width=2.2in]{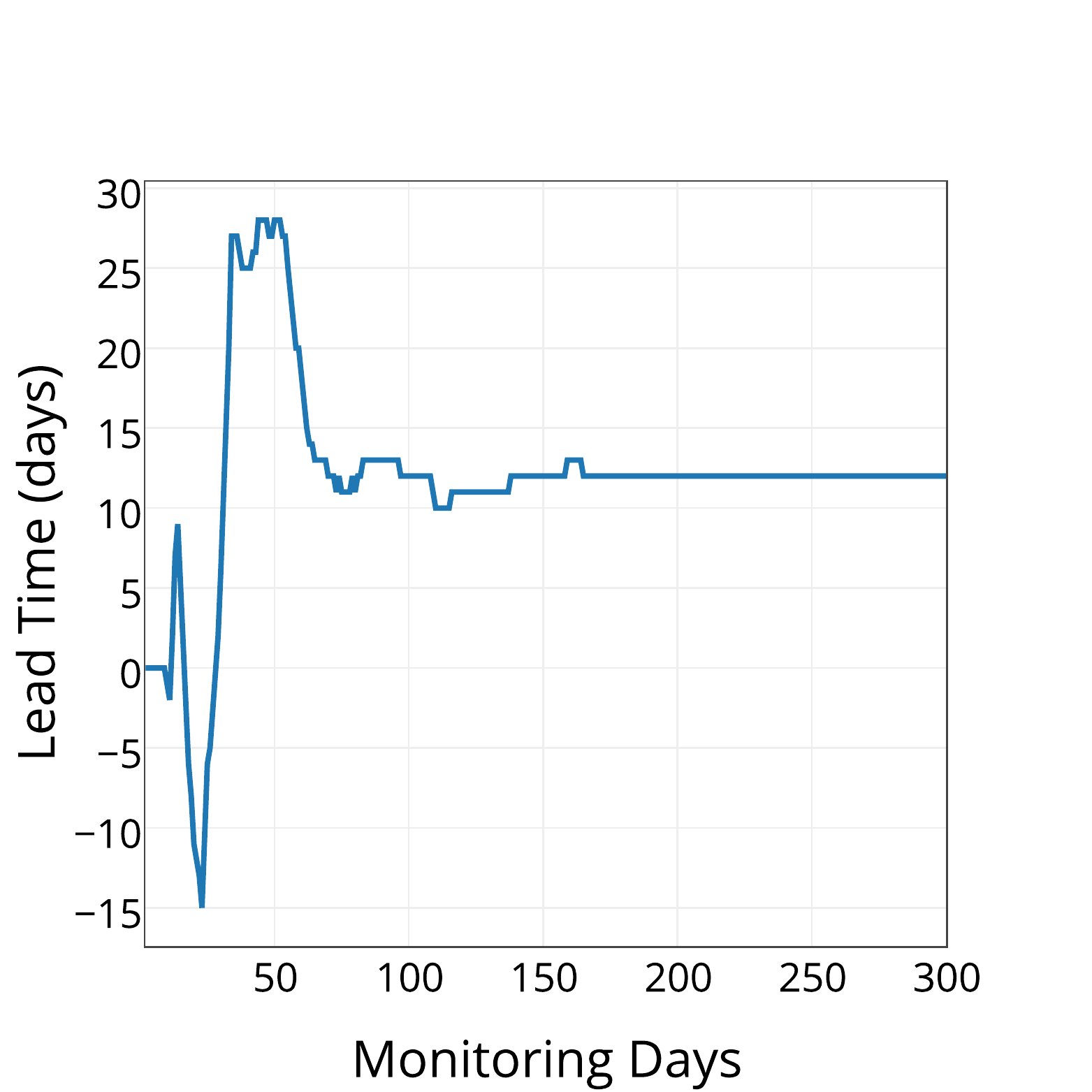}
		\caption{\textbf{Stability of the lead time estimation. The etimated the
		lead time fluctuates initially. As the number of monitoring days
		increases, it stabilizes quickly.}}
		\vspace{1.1cm}
		\label{fig_fitPredict}
	\end{minipage}
\end{figure*}

\subsection{How many sensors to choose?}
\label{sensor-size}
Since we have already demonstrated the influences of the network topology on
social sensor selection strategies, we will put the \oregon~dataset aside, and
focus on the social contact network datasets for US cities in the rest of the
experiments. An interesting conundrum is the number of sensors to select in a
design. Fig.~\ref{size_miami} depicts the mean lead time and the inverse of
variance-to-mean ratio of the lead time v.s.\ the sensor size for the
\miami~datasets. The results show that the variance of the lead time estimate is
high for small size of sensor sets and decreases as the sensor set size
increases. This suggests a natural strategy of scaling the lead time against the
variance, thus helps establish a sweet spot in the trade-off.  This
variance-to-mean ratio is also known as the {\it Fano factor}, which is widely
used as an index of dispersion. In the result for \miami~dataset, there is a
clear peak in the figure of the inverse of variance-to-mean ratio, which
suggests a suitable size of sensors to pick.


\subsection{Empirical study on stability of lead time}
\label{stability}
In this experiment, we study the stability of the estimated lead time as we
observe more data on the sensor group when the number of monitoring days
increases. As is well known, the cumulative incidence curve of flu epidemics can
be modeled by logistic function where the dependent and independent
variables are the flu cumulative incidence and the time of the epidemic (days in
our context). Here, we vary our flu epidemic simulation time from 2 days to
300 days on the \miami~dataset, estimate cumulative incidence curves (with
logistic function) for both the sensor and the random set based on the
simulated cumulative flu incidence data, and then compute the lead time.
Fig.~\ref{fig_fitPredict} shows the lead time v.s.\ the flu epidemic simulation
time. As we can see from this figure, the estimated lead time fluctuates a lot
when the simulation time is short and stabilizes at around 12 days when the
epidemic simulation time is more than around 80 days. Such results provide some
insights for public health officials on how much epidemic data they should
collect in order to make an accurate estimation of the flu outbreak from the
time domain perspective.

\subsection{Predicting population epidemic curve from sensor group epidemic
curve} 
\label{predict-epicurve}

In this experiment, we study the relationship between the flu cumulative
incidence curve of sensor and that of random group. As we mentioned before, we
use random set to represent the entire population since it is usually quite
difficult to characterize the entire population in practice when the dataset is
quite large. We try to estimate a polynomial regression model with degree of
three where the observed cumulative incidence of the sensor group serves as
predictor and that of the random group serves as responses. Here, the sensor
group is selected by the dominator tree heuristic from \miami~dataset. Over the
300 simulated days, we use the data of the first 150 days to estimate our
polynomial regression model, and make predictions of the cumulative incidence of
random group for the rest of the 150 days. Fig.~\ref{fig:predict_cum} shows the
fitted polynomial regression model compared to the true relation curve of the
flu cumulative incidences between sensor group and random group. As we can see
from this figure, the polynomial regression model with degree of three could
capture the relationship between the cumulative incidences of random group and
sensor group quite well, which can help us predict the epidemic curve of entire
population with epidemic data collected from the sensor group.

\begin{figure*}[!t]
	\begin{minipage}{0.33\textwidth}
		\centering\vspace{-0.3cm}
		\includegraphics[width=2.2in]{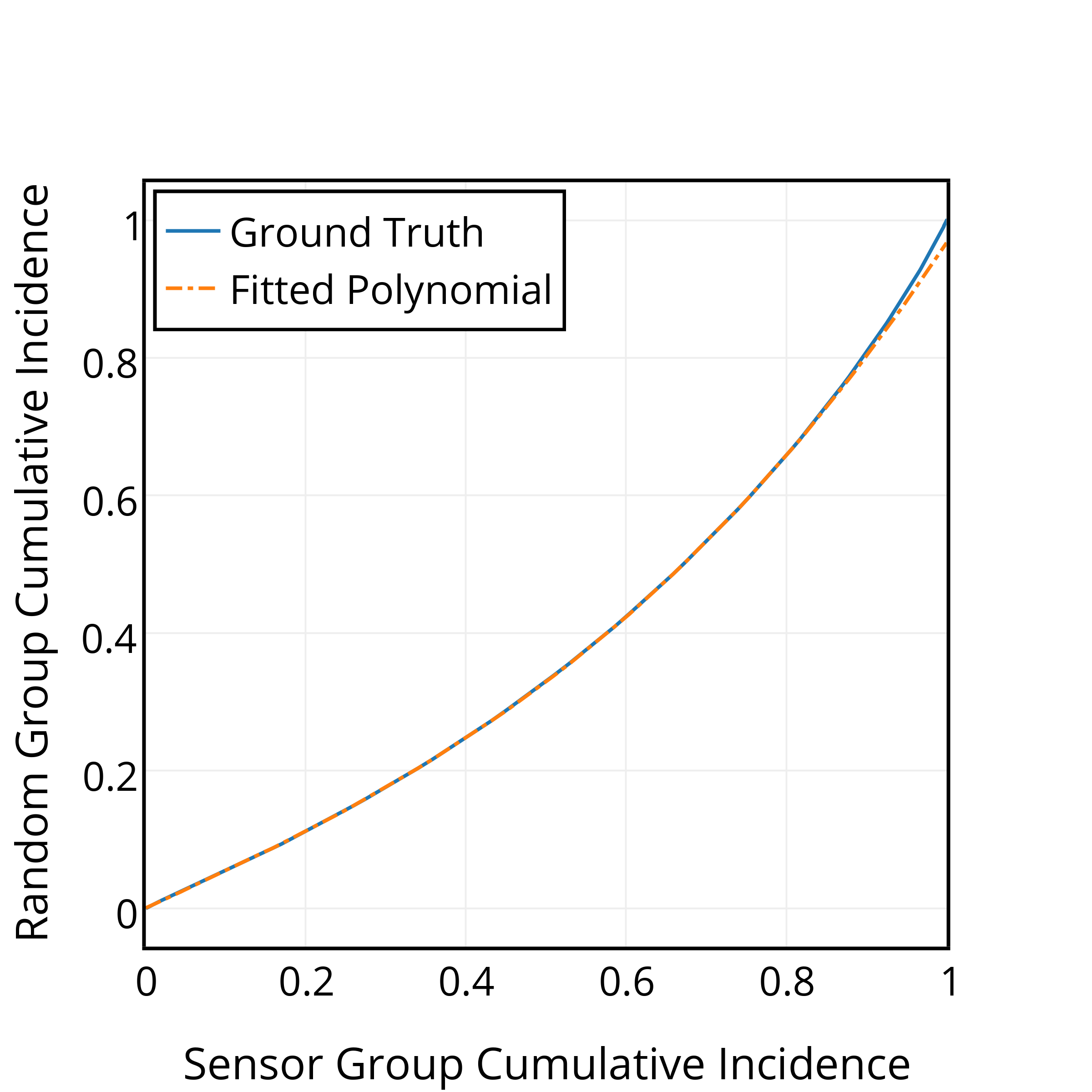}
		\caption{\textbf{Predicting cumulative incidence of random group with
		sensor group for \miami~dataset.}}
		\label{fig:predict_cum}
	\end{minipage}
	\hfill
	\begin{minipage}{0.66\textwidth}
		\centering
		\vspace{-0.5cm}
		\includegraphics[width=2.2in]{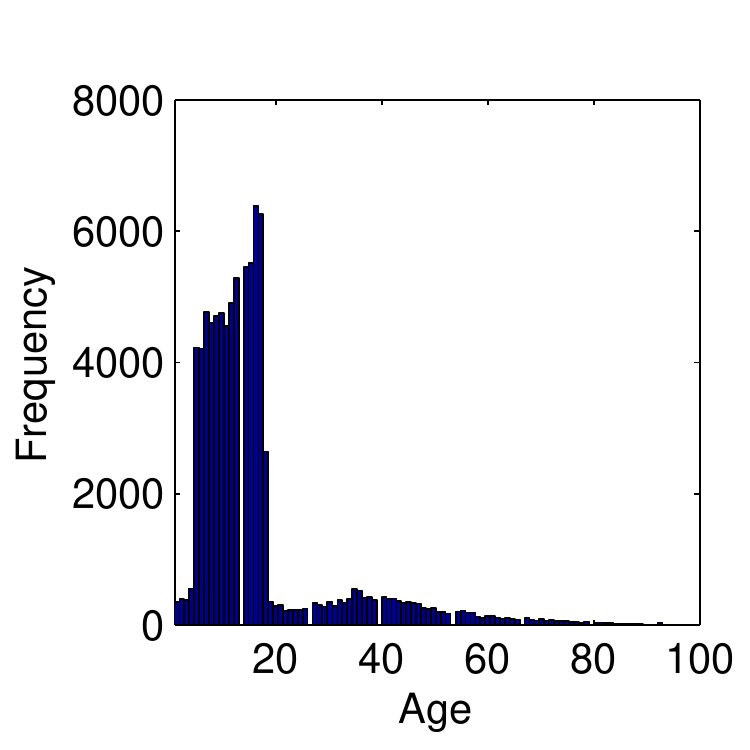}
		\includegraphics[width=2.2in]{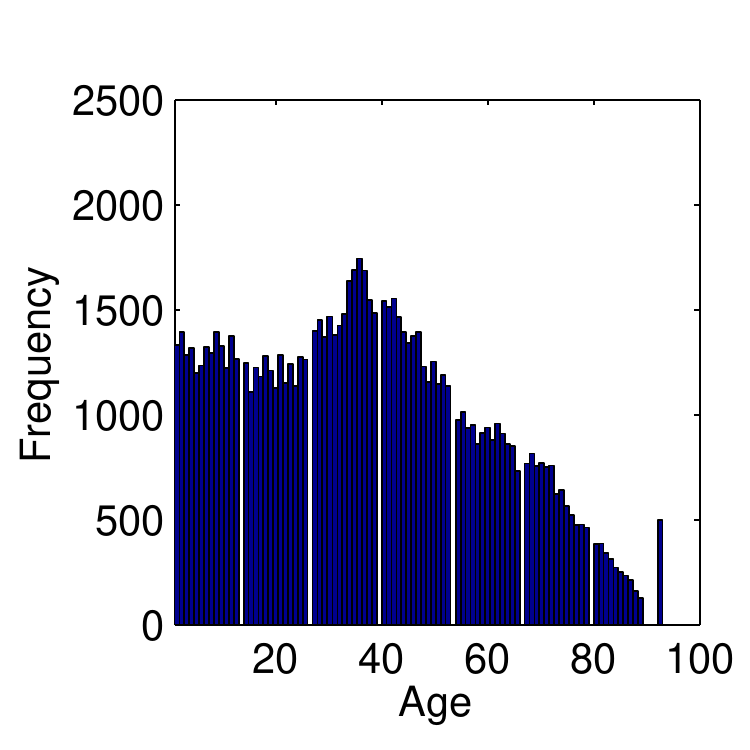}
		\vspace{-0.7cm}
		\caption{\textbf{Distribution of ages for sensor groups (left) and random
			groups (right).}}
		\label{figg1}
	\end{minipage}
\end{figure*}

\subsection{Surrogates for social sensors}
\label{surrogates}

\begin{figure*}[!t]
	\begin{minipage}{0.495\textwidth}
		\centering
		\includegraphics[width=1.6in]{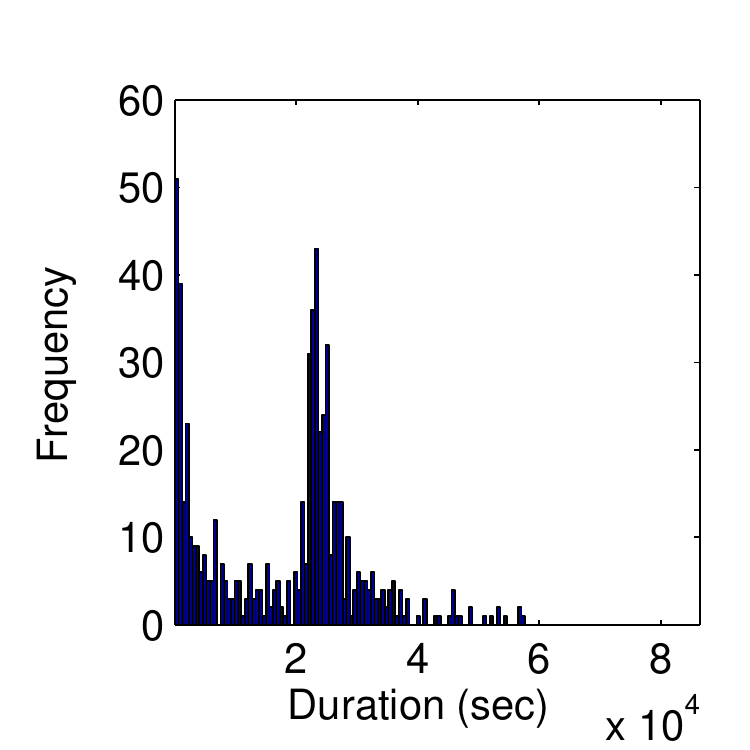}
		\includegraphics[width=1.6in]{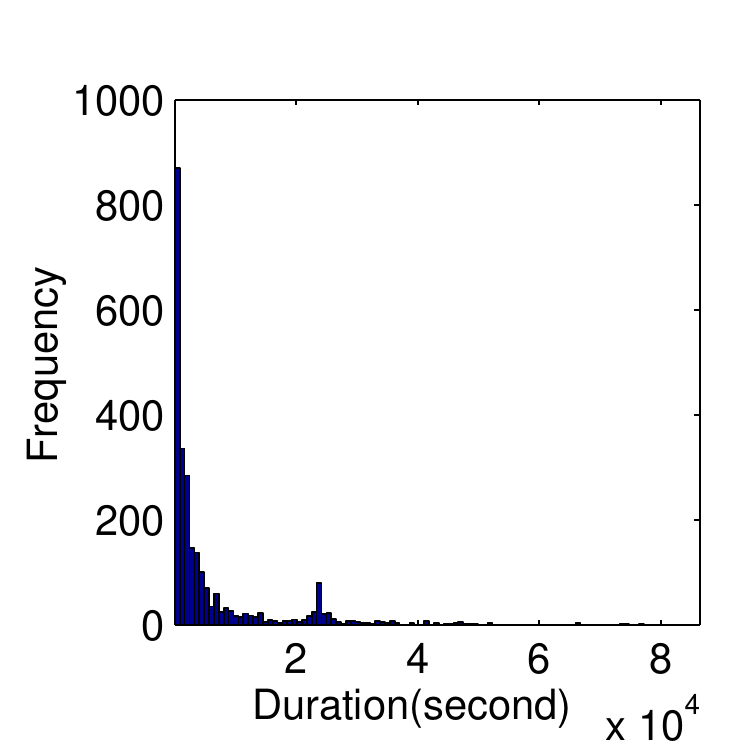}
		\caption{\textbf{Distribution of total meeting duration time with
			neighbor vertices for sensor groups (left) and random groups (right).}}
		\vspace{0.8cm}
		\label{figg2}
	\end{minipage}
	\hfill
	\begin{minipage}{0.495\textwidth}
		\centering
		\includegraphics[width=1.6in]{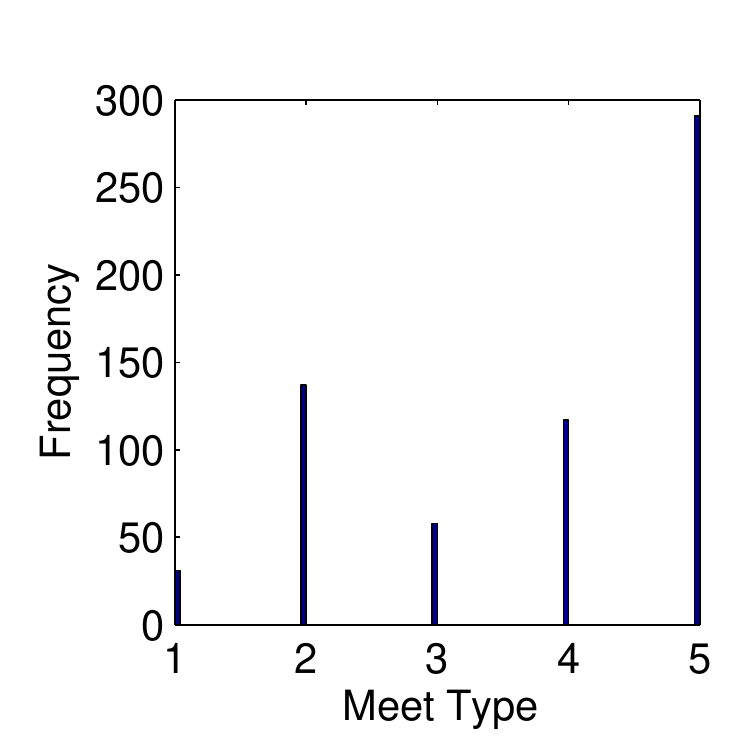}
		\includegraphics[width=1.6in]{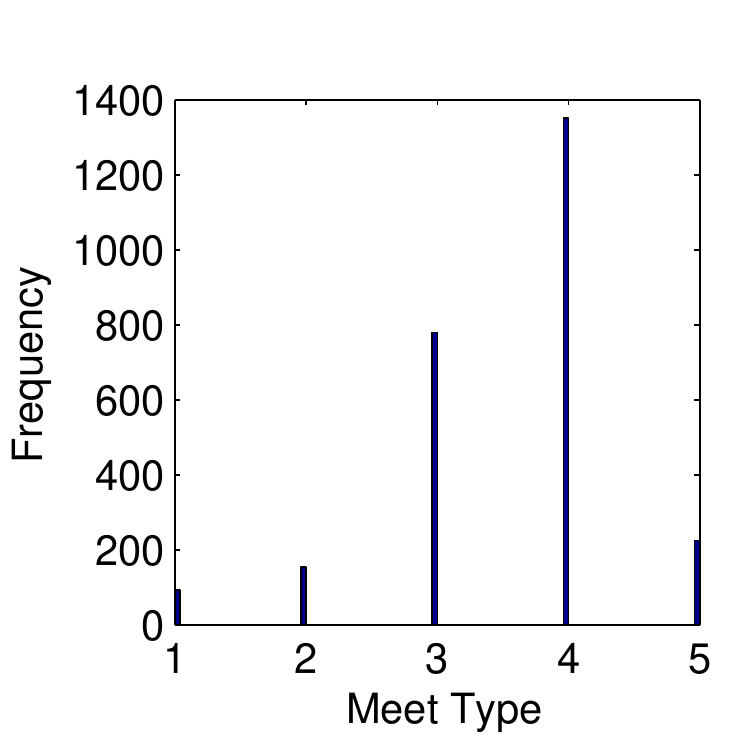}
		\caption{\textbf{Distribution of meeting types for sensor groups (left) and
			random groups (right).} The meeting types in the datasets correspond to
			home (1), work (2), shop (3), visit (4), school (5), and other (6).  }
		\label{figg3}
	\end{minipage}
\end{figure*}

Both of our proposed approaches (TT and DT heuristics) and the previous
experiments we have conducted are based on an importation assumption that we
know the detailed structure of the social contact network. Thus, we are able to
analyze the network structure, and identify the good sensor nodes by direct
inspection. However, in reality, the structures of large scale social contact
networks are usually unknown or difficult to obtain, which makes it difficult to
directly apply our proposed methods.

In order to make the proposed approaches deployable and solve realistic public
health problems, we now relax this key assumption, and try to find a {\it
surrogate} approach to select social sensors. In this case, the policy makers
can implement their strategies without detailed (and intrusive) knowledge of
people and their activities. Surrogates are thus an approach to implement
privacy-preserving social network sensors.

The key idea of our surrogate approach is to utilize the demographic
information. Here, we use \miami{} dataset as an example to explain our
surrogate approach. We extracted the following 16 demographic features from
\miami{} dataset:
\begin{itemize}[itemsep=-3pt]
	\item Age, gender, and income
	\item Number of meetings with neighbor nodes
	\item Total meeting duration with neighbor nodes
	\item Number of meetings whose durations are longer than 20000 seconds
	\item Number of meetings of types 1--5
	\item Percent of meetings of types 1--5
\end{itemize}
The meeting types of 1--5 refer to home, work, shop, visit and school,
respectively. Among all these features, we first identify the differences of
the feature distributions between the sensor set selected by the proposed
transmission tree (or dominator tree) based heuristic and the random set when
the network structure is known. Large difference indicates that
the corresponding demographic feature characterizes the sensor set, thus could
be used to select surrogate sensors. Here for the \miami~dataset, we choose the
features of \emph{Age}, \emph{Total meeting duration with neighbor nodes} and
\emph{Meeting types} to help select surrogate sensors since these three features
best characterize the sensor set for the \miami~dataset.
Fig.~\ref{figg1},~\ref{figg2} and~\ref{figg3} compare the empirical
distributions of the sensor set and the random set for the three selected
features for \miami~dataset.

Using the three identified features, we derived the following three criteria to
choose surrogate sensors:
\begin{itemize}[itemsep=-3pt]
	\item People must come from the age group of 5-20 years (Fig.~\ref{figg1}).
	\item At least $80\%$ of the meetings with the neighbor vertices must have
		durations greater than $20,000$ seconds (Fig.~\ref{figg2} ).
	\item At least $80\%$ of the meetings with the neighbor vertices must be
		type $2$ or $5$ (Fig.~\ref{figg3}).
\end{itemize}
Applying these three criteria to the entire population of the \miami~dataset, we
obtained a surrogate sensor set $S^{\prime}$ of size $211,397$, which is still
too large to monitor compared to the typical survey size, e.g.\ $2,000$, in the
public heath studies. Thus, we need more rigorous criteria to further refine the
surrogate sensors. 

Here, we apply the classification and regression tree (CART) method. It should
be pointed out that although we choose CART algorithm there, any other
supervised classification algorithm (e.g., decision trees) can also be used to
refine the surrogate social network sensors. The 16 attributes mentioned above
are used as independent variables in our CART model, and the response variable
is binary to indicate whether a person should be selected as a sensor or not. In
order to learn the CART model, we create the training data as follows. We choose
$0.1\%$ of the entire population ($\approx 2000$) in \miami{} dataset with our
proposed heuristics as the training data with positive responses (social
sensors), and choose another $0.1\%$ randomly as the training data with negative
responses (not social sensors). Then, separate CART models were learned to
refine the surrogate sensor set $S^{\prime}$ for each transmission rate ranging
from $3.0 \times 10^{-5}$ to $5.5 \times 10^{-5}$ with a step size of $5 \times
10^{-6}$. Such transmission rates are the typical values used in various flu
epidemic studies. Each of these CART models selected approximate $30,000$
individuals as surrogate sensors from $S^{\prime}$, and we choose the common
individuals across all the CART models as the final surrogate sensor set
$S^{\prime\prime}$ whose size is $17,393$.

Fig.~\ref{fig_surrogate_cart} compares the estimated lead time between the
surrogate sensor set $S^{\prime \prime}$ and the sensor set selected by
dominator tree heuristic for various flu transmission rates. As we can see from
this figure, although the surrogate sensor set $S^{\prime\prime}$ does not
perform as well as the proposed dominator tree based sensor set, it still
provide a significant lead time, which is good enough to give early warning
to the public health officials for the potential incoming flu outbreak. Most
important, since the CART based surrogate sensor approach does not require the
information of the social contact network structures, it is easy to implement
and deploy in reality compared to the transmission tree and dominator tree based
heuristic approaches. This makes it a promising candidate for predicting flu
outbreaks for public health officials.

\begin{figure*}[!t]
	\begin{minipage}{0.32\textwidth}
		\centering
		\includegraphics[width=2.1in]{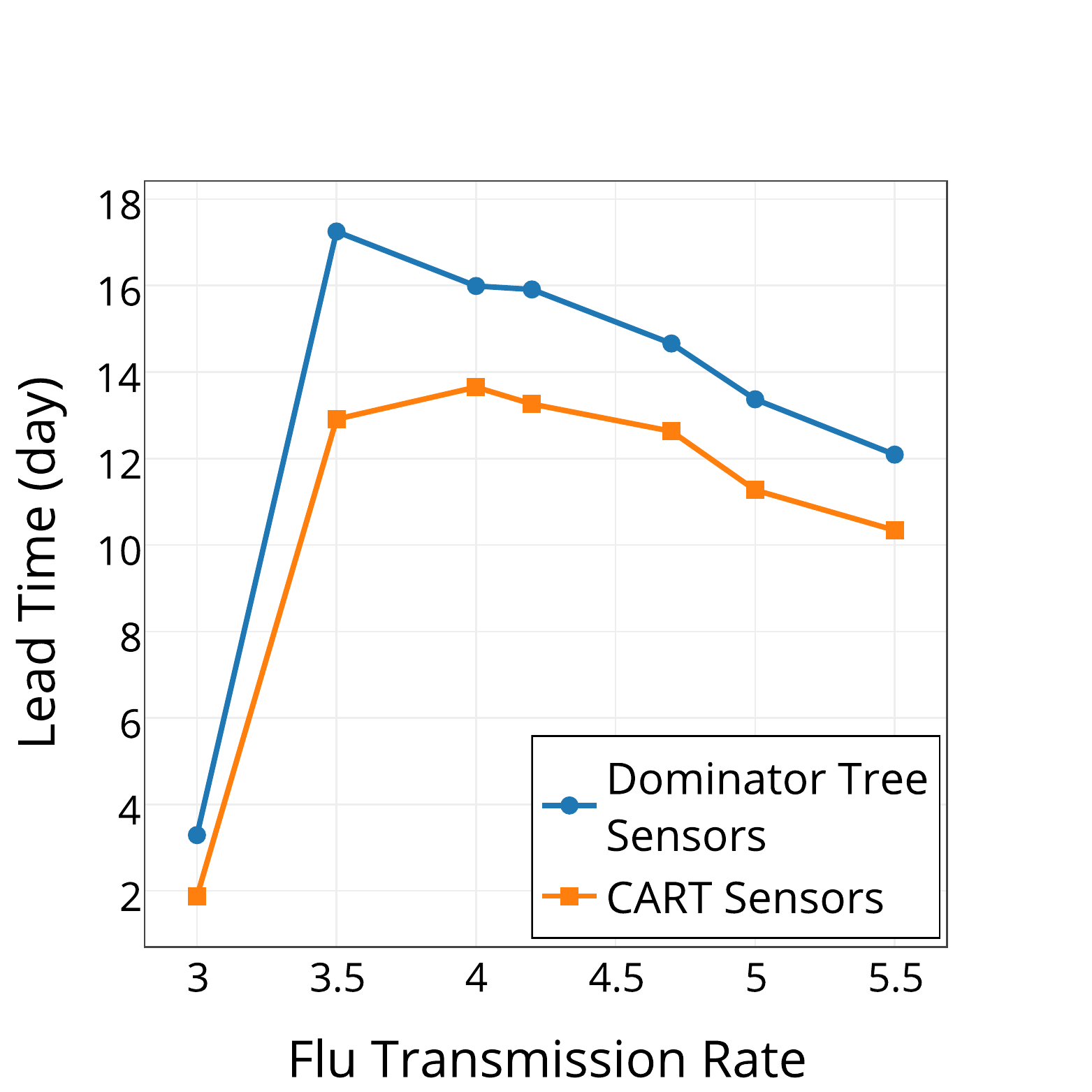}
		\caption{\textbf{Mean lead times estimated with surrogate sensor set
			$S^{\prime \prime}$ and dominator tree based social sensors for
			various flu transmission rates.}}
		\label{fig_surrogate_cart}
	\end{minipage}
	\hfill
	\begin{minipage}{0.64\textwidth}
		\centering
		\includegraphics[width=2.1in]{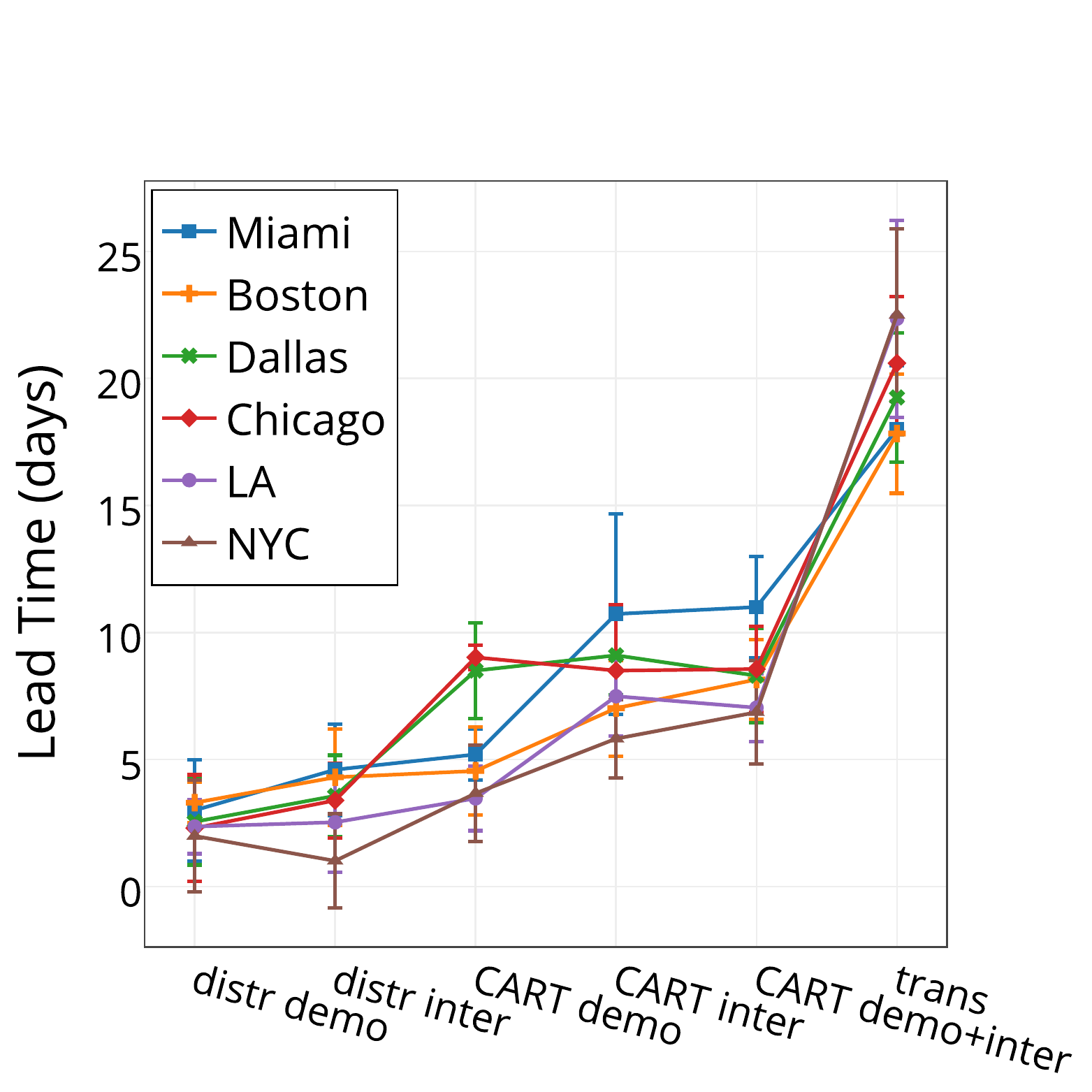}
		\includegraphics[width=2.1in]{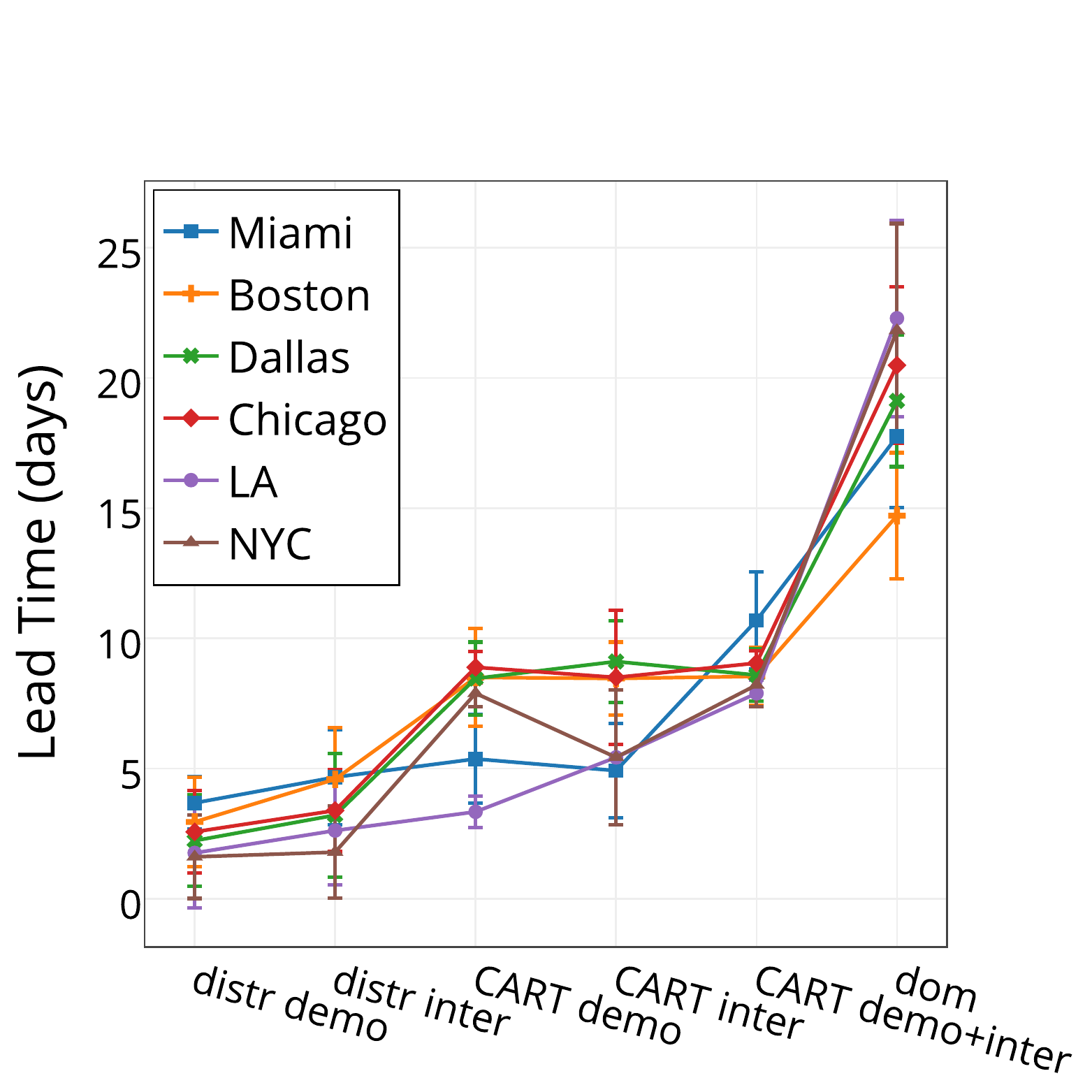}
		\caption{\textbf{The lead time of transmission tree based (left) and
			dominator tree based (right) sensor selection strategies using
			different combinations of individual demographic and interaction
			information on Miami, Boston, Dallas, Chicago, Los Angeles and New
			York City datasets.}}
		\label{fig_surrogate_lead}
	\end{minipage}
\end{figure*}

\subsection{What information should be used to select surrogate sensors?}
Notice that in the last section, when we select the surrogate sensors with CART
algorithm, both demographic (e.g.\ age of individuals) and interaction (e.g.\
total meeting duration and meeting types with neighboring individuals)
information is taken into account. However, which kind of information is more
important in term of estimating the lead time of flu epidemics? What information
should be collected first if the resources are limited for public health
officials? Such practical issues need to be considered when developing surrogate
sensor selection strategies. In this section, we study how the individual
demographic and interaction information influence the estimated flu epidemic
lead time.

In this experiment, besides the \miami~datasets we used in the previous
experiments, we include five other synthetic but realistic social contact
network datasets for large US cities, e.g.\
\boston,~\dallas,~\chicago,~\losangeles~and \newyork. For each city, we
selected the surrogate sensor set and the random set with the fixed size of
$10,000$. The sensor set were selected with the following six strategies: 1)
using empirical distributions of demographic information (distr demo); 2) using
empirical distributions of interaction information (distr inter); 3) using CART
with demographic information (CART demo); 4) using CART with interaction
information (CART inter); 5) using CART with both demographic and interaction
information (CART demo+inter); 6) using transmission tree or dominator tree
based heuristic (trans or dom). We computed the lead time for each of the six
surrogate sensor selection strategies mentioned above, and the results were
averaged across $100$ independent runs. Figure~\ref{fig_surrogate_lead} shows
the lead time of the different approaches over the six US city datasets. As we
can see from the figure, our proposed approaches (CART based approaches and
transmission/dominator tree based approaches) outperforms the two baseline
methods (distr demo/inter), and in general, as more information is taken into
account, the larger estimated lead time could be achieved (since the
transmission/dominator tree based heuristics assume known social contact network
structures, they could be thought of possessing the most information about
epidemics). Furthermore, the individual interaction information seems to be more
important than the demographic information from the perspective of obtaining
larger lead time. Such findings provide some general guidelines for public
health officials on how to design surveys to collect public data in order to
predict flu epidemics.

\section{Related Work}
\label{sec:related}

There are several existing literatures on detecting outbreaks in networks.
Christakis and Fowler~\cite{christakis:10:sensor} proposed a simple heuristic
that monitors the friends of randomly chosen individuals from a social network
as sensors to achieve early detection of epidemics. However, they only
demonstrated their proposed approach on a relatively small social network, e.g.\
student network from Harvard College. As we have shown earlier, their friend
heuristic fails on large social contact networks of US cities. Leskovec et.\
al.~\cite{Leskovec@KDD07} defined objective functions with submodularity to pick
optimal locations to place sensors in water and blog networks, subject to
several metrics like population affected, and time-to-first-detection. In
contrast, our metrics are \emph{not} submodular, more complex (shifts in peak
time) and more realistic for biological epidemics, giving significant additional
time for reaction.

There are a lot of research interests in studying different types of
information dissemination processes on large graphs, including (a)
information cascades~\cite{Bikchandani:1992,Goldenberg:2001}, (b) blog
propagation~\cite{Leskovec:2007:sdma,Gruhl:2004,Kumar:2003,Richardson:2002}, and
(c) viral marketing and product penetration~\cite{Leskovec:2006:ec}. The
classical texts on epidemic models and analysis are May and
Anderson~\cite{andersonmay} and Hethcote~\cite{hethcote2000}. Widely-studied
epidemiological models include {\em homogeneous
models}~\cite{Bailey1975Diseases,McKendrick1926Medical,andersonmay} which assume
that every individual has equal contact with others in the population. Much
research in virus propagation studied the so-called epidemic threshold on
different types of graphs, that is, to determine the condition under which an
epidemic will not break
out~\cite{kephart1993,vespignani2001,deepay2008,ganesh05effect,Prakash@ICDM11}.

Detection and forecasting are fundamental and recurring problems in public
health policy planning, e.g.,~\cite{nishiura11, mckinley09, nsubuga06,
nsoesie11}. National and international public health agencies are actively
involved in syndromic surveillance activities to detect outbreaks of different
infectious diseases---such surveillance information could include confirmed
reports of infections, and estimates of the number of infections. In the initial
days of an outbreak, such information is very limited and noisy, and
understanding the true extent of the outbreak and its dynamics are challenging
problems, e.g.,~\cite{shmueli10}. As in the case of the swine flu pandemic a few
years back\cite{nsubuga06}, whether the epidemic has peaked, is a fundamental
problem. Some of the few papers \cite{nsoesie11, mckinley09} consider the
problems of estimating the temporal characteristics of an outbreak. They
use simulation based approaches for model based reasoning about epicurves and
other characteristics.

Another related problem is immunization, i.e.\ the problem of finding the best
vertices for removal to stop an epidemic, with effective immunization strategies
for static and dynamic graphs~\cite{Hayashi03Recoverable, Tong@ICDM10,
Briesemeister03Epidemic\hide{, prakash2010}}. Other such problems where we
wish to select a subset of `{\em important}' vertices from graphs, include
`finding most-likely culprits of epidemics'~\cite{lappas:10:effectors,
Prakash@ICDM12} and the influence maximization problem for viral
marketing~\cite{richardson2002mining, chen2009efficient,kimura2006tractable}.

\section{Conclusion}
\label{sec:conclusion}
In this paper, we studied the problem of predicting flu outbreaks with social
network sensors. Compared to the previous works, we are the first to
systematically formalize and study this problem. By leveraging the graph
theoretic notion of dominators, we developed an efficient heuristic to select
good social sensors to forecast the flu epidemics when the structure of flu
propagation network is known. Evaluation results on several realistic city-scale
synthetic datasets demonstrate that our proposed approaches are able to identify
the flu outbreak with a significant lead time. Most importantly, our
redescription of the dominator property in terms of demographic information
enables us to develop truly implementable and deployable strategy to select
surrogate social sensors to monitor and forecast flu epidemics, which will
benefits public health officials and government policy makers.

The notion of social sensors is an important one but also one that lends itself
to multiple formalizations. There could be other formalizations that we haven't
studied here, which can be a subject of future work. For instance, in contrast
to the current point estimate of the lead time, we could also try to model the
posterior distribution of the lead time, where Bayesian approach will involve.
Second, exploring additional graph-theoretic primitives in addition to
dominators can be undertaken. Third, a more dynamic strategy of choosing social
sensors can be investigated, e.g.\ recruiting additional sensors when new
knowledge of an emerging epidemic is obtained.

\section*{Acknowledgments}
Supported by the Intelligence Advanced Research Projects Activity (IARPA) via
DoI/NBC contract number D12PC000337, the US Government is authorized to
reproduce and distribute reprints of this work for Governmental purposes
notwithstanding any copyright annotation thereon. Disclaimer: The views and
conclusions contained herein are those of the authors and should not be
interpreted as necessarily representing the official policies or endorsements,
either expressed or implied, of IARPA, DoI/NBC, or the US Government.

\bibliographystyle{plainnat}
\bibliography{all-aditya,naren,paper}

\end{document}